\theoremstyle{definition}
\newtheorem*{runningexample}{Example}
\newcommand{\keywords}[1]{\par\addvspace\baselineskip
\noindent\keywordname\enspace\ignorespaces#1}
\newcommand{\norm}[1]{||#1||}
\newcommand{\onenorm}[1]{||#1||_1}
\newcommand{\R}{\mathbb{R}}
\newcommand{\E}{\mathbb{E}}
\newcommand{\Var}{\textmd{Var}}
\renewcommand{\P}{\mathrm{Pr}}
\newcommand{\supp}{\textmd{supp}}
\newcommand{\G}{\mathcal{G}}
\renewcommand{\L}{\mathcal{L}}
\renewcommand{\H}{\mathcal{H}}
\newcommand{\dirH}{\H_{S,t}}
\newcommand{\dirh}{H_{S,t}}
\newcommand{\dirhkpr}{\rho_{S,t,f}}
\newcommand{\dirhkapprox}{\hat{\rho}_{S,t,f}}
\newcommand{\localsolver}{\textsc{Local Linear Solver}}
\newcommand{\dirhkpralg}{\texttt{ApproxDirHKPR}}
\newcommand{\dirhkpralgparams}{\texttt{ApproxDirHKPR($G,t,f,S,\epsilon$)}}
\newcommand{\rparamdir}{\frac{16}{\epsilon^3}\log n}
\newcommand{\K}{t/\epsilon}
\newcommand{\dirhkprcomplexity}{O\Big( \epsilon^{-4} t\log n \Big)}
\newcommand{\solverapproxhkpr}{\texttt{SolverApproxDirHKPR}}
\newcommand{\solverapproxhkprparams}{\texttt{SolverApproxDirHKPR($G,t,f,S,\epsilon$)}}
\newcommand{\greensalg}{\texttt{GreensSolver}}
\newcommand{\greensalgparams}{\texttt{GreensSolver($G,b,S,\gamma,\epsilon$)}}
\newcommand{\rparamsolver}{\gamma^{-2} \log(s\gamma^{-1})}
\newcommand{\tparamsolver}{s^3 \log(s^3\gamma^{-1})}
\newcommand{\greenscomplexity}{O\left( \gamma^{-2} \epsilon^{-3} s^3 \log^2(s^3 \gamma^{-1})
\log n \right)}
\begin{document}

\mainmatter  

\title{Solving Local Linear Systems with Boundary Conditions Using
Heat Kernel Pagerank\footnote{An extended abstract appeared in \emph{Proceedings of WAW} (2013)
\cite{cs:hklinear:13}.}}

\titlerunning{Solving Linear Systems with Heat Kernel}

\author{Fan Chung%
\and Olivia Simpson}

\authorrunning{Chung and Simpson}

\institute{Department of Computer Science and Engineering,\\
University of California, San Diego\\
La Jolla, CA 92093\\
\mailsa}

\toctitle{Solving Linear Systems With Heat Kernel}
\tocauthor{Chung and Simpson}
\maketitle

\begin{abstract}
We present an efficient algorithm for solving local linear systems with a
boundary condition using the Green's function of a connected induced subgraph
related to the system.  We introduce the method of using the Dirichlet heat
kernel pagerank vector to approximate local solutions to linear systems in the
graph Laplacian satisfying given boundary conditions over a particular subset of
vertices.  With an efficient algorithm for approximating Dirichlet heat kernel
pagerank, our local linear solver algorithm computes an approximate local
solution with multiplicative and additive error $\epsilon$ by performing
$O(\epsilon^{-5}s^3\log(s^3\epsilon^{-1})\log n)$ random walk steps, where $n$
is the number of vertices in the full graph and $s$ is the size of the local
system on the induced subgraph.

\keywords{local algorithms, graph Laplacian, heat kernel pagerank, symmetric
diagonally dominant linear systems, boundary conditions}
\end{abstract}

\section{Introduction}
There are a number of linear systems which model flow over vertices of
a graph with a given boundary condition.  A classical example is the case of an
electrical network.  Flow can be captured by measuring electric current between
points in the network, and the amount that is injected and removed from the
system.  Here, the points at which voltage potential is measured can be
represented by vertices in a graph, and edges are associated to the ease with
which current passes between two points.  The injection and extraction points
can be viewed as the boundary of the system, and the relationship of the flow
and voltage can be evaluated by solving a system of linear equations over the
measurement points.

Another example is a decision-making process among a network of agents.  Each
agent decides on a value, but may be influenced by the decision of other agents
in the network.  Over time, the goal is to reach consensus among all the agents,
in which each agrees on a common value.  Agents are represented by vertices, and
each vertex has an associated value.  The amount of influence an agent has on a
fellow agent is modeled by a weighted edge between the two representative
vertices, and the communication dynamics can be modeled by a linear system.  In
this case, some special agents which make their own decisions can be viewed as
the boundary.

In both these cases, the linear systems are equations formulated in the graph
Laplacian.  Spectral properties of the Laplacian are closely related to
reachability and the rate of diffusion across vertices in a graph \cite{ch0}.
Laplacian systems  have been used to concisely characterize qualities such as
edge resistance and the influence of communication on edges
\cite{spielman:algorithms:10}.  There is a substantial body of work on efficient
and nearly-linear time solvers for Laplacian linear systems
(\cite{forsythematrixmonte,st:nearlylinear:04,vaidya:preconditioners:91,kmp:optimalsdd:10,kmp:mlognsdd:11,lee2013efficient,kosz:combosdd:13,km:parallel:07,bgkmp:parallel:11,ps:parallelsdd:13,sachdeva2013matrix,ckmpprx:sdd:stoc14}, see also~\cite{vishnoi:lxb}).

The focus of this paper is a localized version of a Laplacian linear solver.  In
a large network, possibly of hundreds of millions of vertices, the algorithms we
are dealing with and the solutions we are seeking are usually of finite support.
Here, by finite we mean the support size depends only on the requested output
and is independent of the full size of the network.  Sometimes we allow sizes up
to a factor of $\log(n)$, where $n$ is the size of the network. 

The setup is a graph and a boundary condition given by a vector with specified
limited support over the vertices.  In the local setting, rather than computing
the full solution we compute the solution over a fraction of the graph and de
facto ignore the vertices with solution values below the multiplicative/additive
error bound.  In essence we avoid computing the entire solution by focusing
computation on the subset itself.  In this way, computation depends on the size
of the subset, rather than the size of the full graph.  We distinguish the two
cases as ``global'' and ``local'' linear solvers, respectively.  We remark that
in the case the solution is not ``local,'' for example, if \emph{all} values are
below the error bound, our alogrithm will return the zero vector -- a valid
approximate solution according to our definition of approximation.  

In this paper, we show how local Laplacian linear systems with a boundary
condition can be solved and efficiently approximated by using Dirichlet heat
kernel pagerank, a diffusion process over an induced subgraph.  We will
illustrate the connection between the Dirichlet heat kernel pagerank vector and
the Green's function, or the inverse of a submatrix of the Laplacian determined
by the subset.  We also demonstrate the method of approximation using random
walks.  Our algorithm approximates the solution to the system restricted to the
subset $S$ by performing $\greenscomplexity$ random walk steps, where $\gamma$
is the error bound for the solver and $\epsilon$ is the error bound for
Dirichlet heat kernel pagerank approximation, and $s$ denotes the size of $S$.
We assume that performing a random walk step and drawing from a distribution
with finite support require constant time.  With this, our algorithm runs in
time $O\left( \gamma^{-2}\epsilon^{-3} \log^4(n) \log^2(\gamma^{-1} \log^3(n))
\right)$ when the support size of the solution is $O(\log n)$.  Note that in our
computation, we do not intend to compute or approximate the matrix form of the
inverse of the Laplacian.  We intend to compute an approximate local solution
which is optimal subject to the (relaxed) definition of approximation.

\subsection{A Summary of the Main Results}
\label{sec:mainresult}
We give an algorithm called \localsolver for approximating a local solution of a
Laplacian linear system with a boundary condition.  The algorithm uses the
connection between the inverse of the restricted Laplacian and the Dirichlet
heat kernel of the graph for approximating the local solution with a sampling of
Dirichlet heat kernel pagerank vectors (heat kernel pagerank restricted to a
subset $S$).  It is shown in Theorem~\ref{thm:localsolver} that the output of
\localsolver~approximates the exact local solution $x_S$ with absolute error
$O(\gamma\norm{b} + \norm{x_S})$ for boundary vector $b$ with probability at
least $1-\gamma$.

We present an efficient algorithm for approximating Dirichlet heat kernel
pagerank vectors, \dirhkpralg.  The algorithm is an extension of the algorithm
in~\cite{cs:iwocahkpr}. The definition of $\epsilon$-approximate vectors is
given in Section~\ref{sec:hkprapprox}.  We note that this notion of
approximation is weaker than the classical notions of total variation distance
among others.  Nevertheless, this ``relaxed'' notion of approximation is used in
analyzing PageRank algorithms (see \cite{bbct:sublinearpr:waw12}, for example)
for massive networks.
 
The full algorithm for approximating a local linear solution, \greensalg, is
presented in Section~\ref{sec:efficientsolver}.  The algorithm is an invocation
of \localsolver~with the \dirhkpralg~called as a subroutine.  The full agorithm
requires $\greenscomplexity$ random walk steps by using the algorithm
\dirhkpralg~with a slight modification.  Our algorithm achieves sublinear time
after preprocessing which depends on the size of the support of the boundary
condition.  The error is similar to the error of \dirhkpralg.

\medskip It is worth pointing out a number of ways our methods can be
generalized.  First, we focus on unweighted graphs, though extending our results
to graphs with edge weights follows easily with a weighted version of the
Laplacian.  Second, we require the induced subgraph on the subset $S$ be
connected.  However, if the induced subgraph is not connected the results can be
applied to components separately, so our requirement on connectivity can be
relaxed.  Finally, we restrict our discussion to linear systems in the graph
Laplacian.  However, by using a linear-time transformation due
to~\cite{gmz:parallel:95} for converting a symmetric, diagonally dominant linear
system to a Laplacian linear system, our results apply to a larger class of
linear systems.

\subsection{Organization}
In Section~\ref{sec:prelims}, we give definitions and basic facts for graph
Laplacian and heat kernel.  In Section~\ref{sec:boundarycondition} the problem
is introduced in detail and provides the setting for the local solver.  The
algorithm, \localsolver, is presented in Section~\ref{sec:localsolver}.  After
this, we extend the solver to the full approximation algorithm using approximate
Dirichlet heat kernel pagerank.  In Section~\ref{sec:hkprapprox}, we give the
definition of local approximation and analyze the Dirichlet heat kernel pagerank
approximation algorithm.  In Section~\ref{sec:efficientsolver}, the full
algorithm for computing an approximate local solution to a Laplacian linear
system with a boundary condition, \greensalg, is given.  Finally in
Section~\ref{sec:example} we illustrate the correctness of the algorithm with an
example network and specified boundary condition.  The example demonstrates
visually what a local solution is and how \greensalg~successfully approximates
the solution within the prescribed error bounds when the solution is
sufficiently local.

\section{Basic Definitions and Facts}
\label{sec:prelims}
Let $G$ be a simple graph given by vertex set $V = V(G)$ and edge set $E =
E(G)$.  Let $u \sim v$ denote $\{u,v\}\in E$.  When considering a real vector
$f$ defined over the vertices of $G$, we say $f\in\R^V$ and the \emph{support}
of $f$ is denoted by $\supp(f) = \{v\in V : f(v) \neq 0\}$.  For a subset of
vertices $S\subseteq V$, we say $s=|S|$ is the size of $S$ and use $f\in\R^S$ to
denote vectors defined over $S$.  When considering a real matrix $M$ defined
over $V$, we say $M \in \R^{V\times V}$, and we use $M_S$ to denote the
submatrix of $M$ with rows and columns indexed by vertices in $S$.  Namely,
$M_S\in \R^{S\times S}$.  Similarly, for a vector $f\in\R^V$, we use $f_S$ to
mean the subvector of $f$ with entries indexed by vertices in $S$.  The
\emph{vertex boundary} of $S$ is $\delta(S) = \{ u \in V \setminus S : \{u,v\}
\in E ~~\text{for some}~ v \in S\}$, and the \emph{edge boundary} is
$\partial(S) = \{ \{u,v\}\in E : u\in S, v\notin S \}$.

\subsection{Graph Laplacians and heat kernel}
\label{sec:graphlaplaciansandheatkernel}
For a graph $G$, let $A$ be the indicator adjacency matrix $A \in
\{0,1\}^{V\times V}$ for which $A_{uv} = 1$ if and only if $\{u,v\} \in E$.  The
\emph{degree} of a vertex $v$ is the number of vertices adjacent to it, $d_v =
|\{u \in V | A_{uv} = 1\}|$.  Let $D$ be the diagonal degree matrix with entries
$D_{vv} = d_v$ on the diagonal and zero entries elsewhere.  The \emph{Laplacian}
of a graph is defined  to be $L=D-A$.  The \emph{normalized Laplacian}, $\L =
D^{-1/2}LD^{-1/2}$, is a degree-nomalized formulation of $L$, given by
\begin{equation*}
\L(u,v) =
\begin{cases}
1 &\mbox{ if } u = v,\\
\frac{-1}{\sqrt{d_ud_v}} &\mbox{ if } u\sim v,\\
0 &\mbox{ otherwise.}
\end{cases}
\end{equation*}

Let $P=D^{-1}A$ be the transition probability matrix for a random walk on the
graph.  Namely, if $v$ is a neighbor of $u$, then $P(u,v) = 1/d_u$ denotes the
probability of moving from vertex $u$ to vertex $v$ in a random walk step.
Another related matrix of significance is the \emph{Laplace operator}, $\Delta =
I-P$.  We note that $\L$ is similar to $\Delta$.

The \emph{heat kernel} of a graph is defined for real $t>0$ by
\begin{equation*}
\H_t = e^{-t\L}.
\end{equation*}
Consider a similar matrix, denoted by $H_t = e^{-t\Delta} =
D^{-1/2}\H_tD^{1/2}$.  For a given $t\in\R^+$ and a preference vector
$f\in\R^V$, the \emph{heat kernel pagerank} is defined by
\begin{equation*}
\rho_{t,f} = f^T H_t,
\end{equation*}
where $f^T$ denotes the transpose of $f$.  When $f$ is a probability
distribution on $V$, we can also express the heat kernel pagerank as an
exponential sum of random walks.  Here we follow the notation for random walks
so that a random walk step is by a right multiplication by $P$:
\begin{equation*}
\rho_{t,f} = f^T e^{-t\Delta} = e^{-t} \sum\limits_{k=0}^{\infty} \frac{t^k}{k!}f^T
P^k.
\end{equation*}

\subsection{Laplacian Linear System}
\label{sec:laplacianlinearsystem}
The examples of computing current flow in an electrical network and consensus in
a network of agents typically require solving linear systems with a boundary
condition formulated in the Laplacian $L = D-A$, where $D$ is the diagonal
matrix of vertex degrees and $A$ is the adjacency matrix of the network.  The
problem in the global setting is the solution to $L \mathtt{x} = \mathtt{b}$,
while the solution $\mathtt{x}$ is required to satisfy the boundary condition
$\mathtt{b}$ in the sense that $\mathtt{x}(v) = \mathtt{b}(v)$ for every vertex
$v$ in the support of $\mathtt{b}$.  Because our analysis uses random walks, we
use the normalized Laplacian $\L = D^{-1/2} L D^{-1/2}$.  We note that the
solution $x$ for Laplacian linear equations of the form $\L x = b$ is equivalent
to solving $L\mathtt{x} = \mathtt{b}$ if we take $\mathtt{x} = D^{-1/2}x$ and
$\mathtt{b} = D^{1/2}b$.  Specifically, our local solver computes the solution
$x$ restricted to $S$, denoted $x_S$, and we do this by way of the discrete
Green's function.

\begin{runningexample}
To illustrate the local setting, we expand upon the problem of a network of
decision-making agents.  Consider a communication network of agents in which a
certain subset of agents $f \subset V$ are \emph{followers} and an adjacent
subset $l \subset V\setminus f$ are \emph{leaders} (see
Figure~\ref{fig:consensus}).  Imagine that the decision values of each agent
depend on neighbors as usual, but also that the values of the leaders are fixed
and will not change.  Specifically, let $d_v$ denote the degree of agent $v$, or
the number of adjacent agents in the communication network, and let $x$ be a
vector of decision values of the agents.  Suppose every follower $v_f$
continuously adjusts their decision according to the protocol:
\begin{equation*}
x(v_f) = x(v_f) - \frac{1}{\sqrt{d_{v_f}}}\sum\limits_{u\sim v_f}\frac{x(u)}{\sqrt{d_u}},
\end{equation*}
while every leader $v_l$ remains fixed at $b(v_l)$.  Then the vector of decision
values $x$ is the solution to the system $\L x = b$, where $x$ is required to
satisfy the boundary condition.

In our example, we are interested in computing the decision values of the
followers of the network where the values of the leaders are a fixed boundary
condition, but continue to influence the decisions of the subnetwork of
followers.

\begin{figure}
\centering
\includegraphics[width=0.7\textwidth]{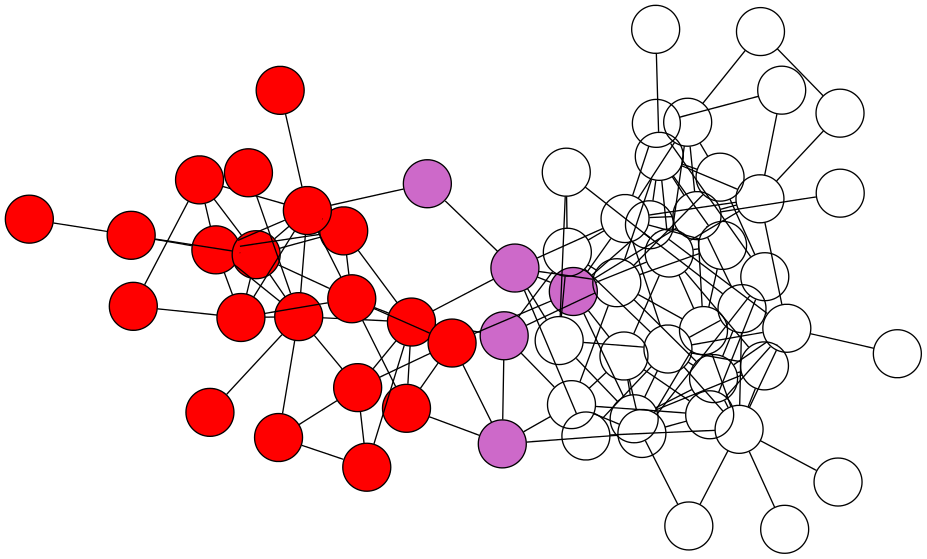}
\caption{A communication network of agents where the leaders (in purple) have
fixed decisions and the followers (in red) compute their decisions based on the
leaders and the subnetwork of followers.  The local solution would be the
decisions of the followers.}
\label{fig:consensus}
\end{figure}
\end{runningexample}

\section{Solving Local Laplacian Linear Systems with a Boundary Condition}
\label{sec:boundarycondition}
For a general connected, simple graph $G$ and a subset of vertices $S$, consider
the linear system $ \L x = b, $ where the vector $b$ has non-empty support on
the vertex boundary of $S$.  The global problem is finding a solution $x$ that
agrees with $b$, in the sense that $x(v) = b(v)$ for every vertex $b$ in the
support of $b$.  In this case we say that $x$ \emph{satisfies the boundary
condition} $b$.

Specifically, for a vector $b\in\R^V$, let $S$ denote a subset of vertices in
the complement of $\supp(b)$.  Then $b$ can be viewed as a function defined on
the vertex boundary $\delta(S)$ of $S$ and we say $b$ is a \emph{boundary
condition} of $S$.  Here we will consider the case that the induced subgraph on
$S$ is connected.

\begin{definition}
\label{def:S}
Let $G$ be a graph and let $b$ be a vector $b\in\R^V$ over the vertices of $G$
with non-empty support.  Then we say a subset of vertices $S\subset
V$ is a \emph{$b$-boundable subset} if
\begin{enumerate}[(i)]
\item $S \subseteq V\setminus \supp(b)$,
\item $\delta(S) \cap \supp(b) \neq \emptyset$,
\item the induced subgraph on $S$ is connected and $\delta(S) \neq
\emptyset$.\label{req:inverse}
\end{enumerate}
\end{definition}
We note that condition (\ref{req:inverse}) is required in our analysis later,
although the general problem of finding a local solution over $S$ can be dealt
with by solving the problem on each connected component of the induced subgraph
on $S$ individually.  We remark that in this setup, we \emph{do not} place any
condition on $b$ beyond having non-empty support.  The entries in $b$ may be
positive or negative.

The global solution to the system $\L x = b$ satisfying the boundary condition
$b$ is a vector $x\in \R^V$ with 
\begin{align}\label{eq:x2}
x(v) =
\begin{cases}
\sum\limits_{u\sim v}\frac{x(u)}{\sqrt{d_vd_u}}  &~\text{if} ~ v \in S\\
b(v) &~\text{if}~ v \not \in S
\end{cases}
\end{align}
for a $b$-boundable subset $S$.  The problem of interest is computing the
\emph{local} solution for the \emph{restriction} of $x$ to the subset $S$,
denoted $x_S$.

The eigenvalues of $\L_S$ are called Dirichlet eigenvalues, denoted $\lambda_1
\leq \lambda_2 \leq \cdots \leq \lambda_s$ where $s = |S|$.  It is easy to check
(see \cite{ch0}) that $0<\lambda_i\leq 2$ since we assume $\delta(S) \neq
\emptyset$.  Thus $\L_S^{-1}$ exists and is well defined.  In fact, $s^{-3} <
\lambda_1 \leq 1$.  

Let $A_{S, \delta S}$ be the  $s \times |\delta(S)|$ matrix by
restricting the columns of $A$ to $\delta(S)$ and rows to $S$.  Requiring $S$ to
be a $b$-boundable subset ensures that the inverse $\L_S^{-1}$
exists~\cite{ch0}.  Then the local solution is described exactly in the
following theorem.

\begin{theorem}
\label{thm:xS}
In a graph $G$, suppose $b$ is a nontrivial vector in $\R^V$ and $S$ is a
$b$-boundable subset.  Then the local solution to the linear system $\L x = b$
satisfying the boundary condition $b$ satisfies
\begin{equation}\label{eq:xS}
x_S = \L_S^{-1}(D_S^{-1/2}A_{S, \delta S}D_{\delta S}^{-1/2}b_{\delta S}).
\end{equation}
\end{theorem}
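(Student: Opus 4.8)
\textbf{Proof proposal.}
The plan is to take the defining relations (\ref{eq:x2}) of the global solution $x$, restrict attention to the vertices of $S$, and split the neighbor sums into an ``interior'' part that reassembles into $\L_S x_S$ and a ``boundary'' part that reassembles into $D_S^{-1/2}A_{S,\delta S}D_{\delta S}^{-1/2}b_{\delta S}$. Once this identity is in hand, I invert $\L_S$ --- which is legitimate since $S$ is $b$-boundable, so $\delta(S)\neq\emptyset$ and, as noted above, all Dirichlet eigenvalues satisfy $\lambda_i>0$ --- to solve for $x_S$.

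Concretely, for $v\in S$ the first case of (\ref{eq:x2}) says $x(v)=\sum_{u\sim v} x(u)/\sqrt{d_vd_u}$, which is precisely the statement $(\L x)(v)=0$ because $\L = I - D^{-1/2}AD^{-1/2}$. I would split $\sum_{u\sim v}$ according to whether $u\in S$ or $u\notin S$. The terms with $u\in S$ recombine into the principal-submatrix product: since $(\L_S)(v,u)=\L(v,u)$ for $v,u\in S$, one has $(\L_S x_S)(v) = x(v) - \sum_{u\sim v,\,u\in S} x(u)/\sqrt{d_vd_u}$ (all degrees here are full-graph degrees, consistent with $\L_S$ being a submatrix of $\L$). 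Subtracting, $(\L_S x_S)(v) = \sum_{u\sim v,\,u\notin S} x(u)/\sqrt{d_vd_u}$. Now every neighbor $u$ of $v\in S$ with $u\notin S$ lies in $\delta(S)$ by definition of the vertex boundary, so the right-hand side refers only to values $x(u)$ with $u\in\delta(S)$; in particular any portion of $\supp(b)$ lying outside $\delta(S)$ is irrelevant to $x_S$. Applying the boundary condition $x(u)=b(u)$ for $u\notin S$ from the second case of (\ref{eq:x2}) gives $(\L_S x_S)(v) = \sum_{u\in\delta(S),\,u\sim v} b(u)/\sqrt{d_vd_u}$ for every $v\in S$.

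Finally I would recognize the right-hand side in matrix form: the $(v,u)$ entry of $D_S^{-1/2}A_{S,\delta S}D_{\delta S}^{-1/2}$ is exactly $A_{vu}/\sqrt{d_vd_u}$ for $v\in S$, $u\in\delta(S)$, so the displayed relations over all $v\in S$ read $\L_S x_S = D_S^{-1/2}A_{S,\delta S}D_{\delta S}^{-1/2}b_{\delta S}$. Multiplying on the left by $\L_S^{-1}$ yields (\ref{eq:xS}); as a by-product this also shows $x_S$ (hence the global $x$, whose values off $S$ are just $b$) is uniquely determined. I do not anticipate a genuine obstacle: the argument is a careful bookkeeping of which summands belong to the interior block and which to the boundary block. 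The one point worth spelling out explicitly is that $(\L_S x_S)(v) = x(v) - \sum_{u\sim v,\,u\in S} x(u)/\sqrt{d_vd_u}$ --- i.e. that extracting the principal submatrix of $\L$ and then multiplying by $x_S$ agrees with restricting the matrix--vector product $\L x$ to $S$ after discarding the cross terms --- which is immediate from the definition of $\L_S$ but is the step where a reader might want a line of justification.
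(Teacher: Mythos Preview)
Your proposal is correct and follows essentially the same route as the paper: both split the neighbor sum in (\ref{eq:x2}) for $v\in S$ into interior ($u\in S$) and boundary ($u\in\delta(S)$) parts, identify the former with $\L_S x_S$ and the latter with $D_S^{-1/2}A_{S,\delta S}D_{\delta S}^{-1/2}b_{\delta S}$, and then invert $\L_S$. If anything, your write-up is a bit more explicit about why $\L_S$ is invertible and why the cross terms land in $\delta(S)$.
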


\begin{proof}
The vector $b_1 := D_S^{-1/2}A_{\delta S}D_{\delta S}^{-1/2}b_{\delta S}$ is
defined over the vertices of $S$, and giveover the vertices of $S$  by
\begin{equation}\label{eq:pfb1}
b_1(v) = \sum\limits_{u\in\delta(S), u\sim v}\frac{b(u)}{\sqrt{d_vd_u}}.
\end{equation}
Also, the vector $\L_Sx_S$ is given by, for $v \in S$,
\begin{equation}\label{eq:pflx}
\L_Sx_S(v) = x(v) - \sum\limits_{u\in S, u\sim v}\frac{x(u)}{\sqrt{d_vd_u}}.
\end{equation}
By (\ref{eq:x2}) and (\ref{eq:xS}), we have
\[
x_S(v) = \sum\limits_{u\in S, u\sim v}\frac{x(u)}{\sqrt{d_vd_u}} + \sum\limits_{u\in\delta(S), u\sim v}\frac{b(u)}{\sqrt{d_vd_u}},
\]
and combining (\ref{eq:pfb1}) and (\ref{eq:pflx}), we have that $x_S = \L_S^{-1}b_1$.
\end{proof}

\subsection{Solving the local system with Green's function}
For the remainder of this paper we are concerned with the local solution $x_S$.
We focus our discussion on the restricted space using the assumptions that the
induced subgraph on $S$ is connected and that $\delta(S)\neq \emptyset$.  In
particular, we consider the \emph{Dirichlet heat kernel}, which is the heat
kernel pagerank restricted to $S$.

The Dirichlet heat kernel is written by $\dirH$ and is defined as $\dirH =
e^{-t\L_S}$.  It is the symmetric version of $\dirh$, where $\dirh =
e^{-t\Delta_S} = D_S^{-1/2}\dirH D_S^{1/2}$.

The spectral decomposition of $\L_S$ is 
\begin{equation*}
\L_S= \sum_{i=1}^s \lambda_i \mathbb{P}_i,
\end{equation*}
where $\mathbb{P}_i$ are the projections to the $i$th orthonormal eigenvectors.
The Dirichlet heat kernel can be expressed as
\begin{equation*}
\dirH = \sum\limits_{i=1}^{s} e^{-t\lambda_i}\mathbb{P}_i.
\end{equation*}

Let $\G$ denote the inverse of $\L_S$.  Namely, $\G \L_S = \L_S \G = I_S$.  Then
\begin{align}
\label{gg}
\G & =  \sum_{i=1}^{s} \frac{1}{\lambda_i} \mathbb{P}_i.
\end{align}
From (\ref{gg}), we see that 
\begin{eqnarray}
\label{ng} \frac 1 2 \leq \norm{\G} \leq \frac 1 {\lambda_1},
\end{eqnarray}
where $\norm{\cdot}$ denotes the spectral norm.  We call $\G$ the \emph{Green's
function}, and $\G$ can be related to $\dirH$ as follows:

\begin{lemma}\label{lem:hkprex}
Let $\G$ be the Green's function of a connected induced subgraph on $S \subset
V$ with $s=|S|$.  Let $\dirH$ be the Dirichlet heat kernel with respect to $S$.
Then
\begin{eqnarray*}
\G  = \int_0^{\infty} \dirH~ \mathrm{d}t.
\end{eqnarray*}
\end{lemma}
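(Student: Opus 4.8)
The plan is to prove the identity $\G = \int_0^\infty \dirH\, \mathrm{d}t$ by working in the orthonormal eigenbasis of $\L_S$, where everything diagonalizes and the matrix identity reduces to a scalar integral. First I would invoke the spectral decomposition $\L_S = \sum_{i=1}^s \lambda_i \mathbb{P}_i$ already recorded in the excerpt, together with the expansions $\dirH = \sum_{i=1}^s e^{-t\lambda_i}\mathbb{P}_i$ and $\G = \sum_{i=1}^s \frac{1}{\lambda_i}\mathbb{P}_i$ from equation~(\ref{gg}). Substituting the first expansion into the integrand gives
\begin{equation*}
\int_0^\infty \dirH\, \mathrm{d}t = \int_0^\infty \sum_{i=1}^s e^{-t\lambda_i}\mathbb{P}_i \, \mathrm{d}t.
\end{equation*}

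Next I would justify interchanging the (finite) sum and the integral — this is immediate since the sum has only $s$ terms, so linearity of the integral applies with no convergence subtleties at that stage. That reduces the problem to evaluating each scalar integral $\int_0^\infty e^{-t\lambda_i}\, \mathrm{d}t$. Here the key fact, already established in the excerpt, is that $0 < \lambda_i \le 2$ for every $i$ (indeed $s^{-3} < \lambda_1$), so in particular $\lambda_i > 0$ and each integral converges to $1/\lambda_i$. Plugging back in yields $\int_0^\infty \dirH\, \mathrm{d}t = \sum_{i=1}^s \frac{1}{\lambda_i}\mathbb{P}_i$, which is exactly $\G$ by~(\ref{gg}). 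One should also note that the improper matrix integral is well-defined in, say, the entrywise (or operator-norm) sense precisely because $\lambda_1 > 0$ bounds the decay rate uniformly, so the tail $\int_T^\infty \|\dirH\|\,\mathrm{d}t \le s\, e^{-T\lambda_1}/\lambda_1 \to 0$.

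The argument is essentially mechanical, so there is no real obstacle; the only point requiring care is the convergence of the improper integral, which is where the hypothesis $\delta(S) \neq \emptyset$ (hence $\L_S$ invertible with strictly positive spectrum) is used — without it $\lambda_1$ could be $0$ and the integral would diverge. I would state this dependence explicitly. An alternative, coordinate-free phrasing would be to write $\int_0^\infty e^{-t\L_S}\,\mathrm{d}t = \L_S^{-1}(I_S - \lim_{T\to\infty} e^{-T\L_S}) = \L_S^{-1}$, using that $e^{-T\L_S}\to 0$ as $T\to\infty$ since $\L_S$ is positive definite; but the eigenbasis computation above is cleaner to present and makes the role of the Dirichlet eigenvalues transparent.
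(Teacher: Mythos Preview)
Your proof is correct and follows essentially the same approach as the paper: substitute the spectral expansion $\dirH = \sum_i e^{-t\lambda_i}\mathbb{P}_i$, swap the finite sum with the integral, evaluate $\int_0^\infty e^{-t\lambda_i}\,\mathrm{d}t = 1/\lambda_i$, and recognize the result as $\G$ from~(\ref{gg}). You add more explicit justification for the interchange and for convergence via $\lambda_i > 0$ than the paper does, which is fine but not a different argument.
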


\begin{proof}
By our definition of the heat kernel,
\begin{align*}
\int_0^{\infty}\dirH~ \mathrm{d}t 
&= \int_0^{\infty} \Big(\sum\limits_{i=1}^{s} e^{-t\lambda_i}\mathbb{P}_i \Big) \mathrm{d}t\\
&= \sum\limits_{i=1}^{s}
\Big(\int_0^{\infty}e^{-t\lambda_i}\,\mathrm{d}t\Big)\mathbb{P}_i\\
&= \sum\limits_{i=1}^{s} \frac{1}{\lambda_i}\mathbb{P}_i\\
&= \G.
\end{align*}
\end{proof}

Equipped with the Green's function, the solution (\ref{eq:xS}) can be expressed
in terms of the Dirichlet heat kernel.  As a corollary to Theorem~\ref{thm:xS}
we have the following.

\begin{corollary}\label{cor:xSintegral}
In a graph $G$, suppose $b$ is a nontrivial vector in $\R^V$ and $S$ is a
$b$-boundable subset.  Then the local solution to the linear system $\L x = b$
satisfying the boundary condition $b$ can be written as 
\begin{equation}\label{eq:xSintegral}
x_S = \int_0^{\infty}\dirH b_1 ~\mathrm{d}t,
\end{equation}
where $b_1 = D_S^{-1/2}A_{S, \delta S}D_{\delta S}^{-1/2}b_{\delta S}$.
\end{corollary}
The computation of $b_1$ takes time proportional to the size of the edge
boundary.

\section{A Local Linear Solver Algorithm with Heat Kernel Pagerank}
\label{sec:localsolver}
In the previous section, we saw how the local solution $x_S$ to the system
satisfying the boundary condition $b$ can be expressed in terms of integrals of
Dirichlet heat kernel in (\ref{eq:xSintegral}).  In this section, we will show
how these integrals can be well-approximated by sampling a finite number of
values of Dirichlet heat kernel (Theorem~\ref{thm:sampler}) and Dirichlet heat
kernel pagerank (Corollary~\ref{cor:sampler}).  All norms $\norm{\cdot}$ in this
section are the $L_2$ norm.

\begin{theorem}
\label{thm:sampler}
Let $G$ be a graph and $\L$ denote the normalized Laplacian of $G$.  Let $b$ be
a nontrivial vector $b\in \R^V$ and $S$ a $b$-boundable subset, and let $b_1 =
D_S^{-1/2}A_{S, \delta S}D_{\delta S}^{-1/2}b_{\delta S}$.  Then the local
solution $x_S$ to the linear system $\L x = b$ satisfying the boundary condition
$b$ can be computed by sampling $\dirH b_1$ for $r = \rparamsolver$ values.  If
$\hat{x}_S$ is the output of this process, the result has error bounded by
\begin{equation*}
\norm{x_S - \hat{x}_S} = O\big(\gamma(\norm{b_1} + \norm{x_S})\big)
\end{equation*}
with probability at least $1 - \gamma$.
\end{theorem}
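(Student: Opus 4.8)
The plan is to approximate the improper integral $x_S = \int_0^\infty \dirH b_1\,\mathrm{d}t$ by first truncating the tail and then replacing the remaining finite integral by an average over sampled values of $t$. For the truncation, I would split $\int_0^\infty = \int_0^T + \int_T^\infty$ and bound the tail contribution. Using the spectral decomposition $\dirH b_1 = \sum_{i=1}^s e^{-t\lambda_i}\mathbb{P}_i b_1$ together with the Dirichlet eigenvalue lower bound $\lambda_1 > s^{-3}$ stated in the excerpt, the tail satisfies $\norm{\int_T^\infty \dirH b_1\,\mathrm{d}t} = \norm{\sum_i \frac{e^{-T\lambda_i}}{\lambda_i}\mathbb{P}_i b_1} \le \frac{e^{-T\lambda_1}}{\lambda_1}\norm{b_1} \le s^3 e^{-T/s^3}\norm{b_1}$. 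Choosing $T = \tparamsolver$ (up to constants, $T = \Theta(s^3\log(s^3/\gamma))$) makes this tail at most $O(\gamma\norm{b_1})$. This handles the additive $\gamma\norm{b_1}$ part of the error bound.

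For the finite integral $\int_0^T \dirH b_1\,\mathrm{d}t$, the idea is to view it as $T$ times the expectation $\E_{t\sim \mathrm{Unif}[0,T]}[\dirH b_1]$ and estimate this expectation by drawing $r$ i.i.d.\ samples $t_1,\dots,t_r$ uniformly from $[0,T]$ and forming $\frac{T}{r}\sum_{j=1}^r \H_{S,t_j} b_1$. To control the sampling error I would apply a concentration inequality (a vector-valued Hoeffding/Bernstein bound, or coordinatewise Chernoff followed by a union bound over the $s$ coordinates). The key quantity is the range of the samples: since $\norm{\dirH} \le 1$ for all $t\ge 0$ (all Dirichlet eigenvalues are positive, so $e^{-t\lambda_i}\le 1$), each sample $\H_{S,t_j}b_1$ has norm at most $\norm{b_1}$, so the scaled average $\frac{T}{r}\sum_j \H_{S,t_j}b_1$ concentrates around $\int_0^T \dirH b_1\,\mathrm{d}t$ with deviation $O\big(\frac{T}{\sqrt r}\norm{b_1}\big)$ up to logarithmic factors, except with probability $\gamma$. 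Setting $r = \rparamsolver$ (up to constants, $r = \Theta(\gamma^{-2}\log(s/\gamma))$) and combining with $T = \Theta(s^3\log(s^3/\gamma))$ is designed to make this error term comparable to $\gamma\norm{x_S}$; here I would use the lower bound $\norm{\G}\ge \tfrac12$ from (\ref{ng}) together with $x_S = \G b_1$ to relate $\norm{b_1}$ and $\norm{x_S}$ when needed, so that the bound can be stated as $O(\gamma(\norm{b_1}+\norm{x_S}))$.

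The main obstacle I anticipate is bookkeeping the interplay between the truncation parameter $T$ and the sample count $r$ so that both error contributions collapse into the single clean bound $O(\gamma(\norm{b_1}+\norm{x_S}))$ — in particular, the sampling error carries a factor of $T$, which is polynomial in $s$, so the concentration bound must be strong enough (and $r$ large enough) to absorb this while keeping $r$ at the claimed $\rparamsolver$. A secondary subtlety is choosing the right concentration inequality: a naive Hoeffding bound on the norm gives a $\sqrt r$ improvement but one must be careful that the $\log$ factor in $r$ matches the union bound over coordinates (or the dimension factor in a matrix/vector concentration bound) and the failure probability $\gamma$. Finally, I would double-check that the relaxed multiplicative-plus-additive form of the error — absorbing $\norm{x_S}$ rather than demanding a purely relative guarantee — is exactly what the chosen parameters deliver, since the Dirichlet eigenvalue $\lambda_1$ can be as small as $s^{-3}$ and a purely relative bound would be far more expensive.
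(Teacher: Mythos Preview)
Your truncation step is essentially the same as the paper's, and the paper also discretizes $[0,T]$ into $N=T/\gamma$ grid points before sampling (a minor extra step you skip by sampling continuously). The real divergence is in the concentration argument, and here there is a genuine gap in your plan.

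With a Hoeffding-type bound, the deviation of $\frac{T}{r}\sum_j \H_{S,t_j}b_1$ from its mean is of order $\frac{T}{\sqrt r}\norm{b_1}$ (times log factors). Plugging in $T=\Theta(s^3\log(s^3/\gamma))$ and $r=\Theta(\gamma^{-2}\log(s/\gamma))$ gives a sampling error of order $\gamma\, s^3\sqrt{\log(s/\gamma)}\,\norm{b_1}$, not $O(\gamma(\norm{b_1}+\norm{x_S}))$. Since $\norm{x_S}=\norm{\G b_1}\le \lambda_1^{-1}\norm{b_1}\le s^3\norm{b_1}$, you cannot absorb the extra $s^3$ into $\norm{x_S}$ in general; the obstacle you flagged is fatal for the Hoeffding route with the stated $r$. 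A coordinatewise Chernoff plus union bound runs into the same issue, because the range of each coordinate is still governed by $\norm{b_1}$ while the target is the much smaller quantity $\frac{1}{T}\int_0^T$.

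The paper avoids this by obtaining a \emph{multiplicative} concentration bound. It lifts the vector samples $Y_j=\H_{S,jT/N}b_1$ to rank-one Hermitian matrices $\mathbb{Y}_j=Y_jY_j^T$ and applies a matrix Bernstein inequality. The point of the lift is that, after normalizing $\norm{b_1}=1$, one has $\norm{Y_j}\le 1$ and hence $\norm{\Var(\mathbb{Y})}\le \norm{\E(\mathbb{Y})}$; Bernstein then yields
\[
\P\big(\norm{\mathbb{X}-\E(\mathbb{X})}\ge \gamma\norm{\E(\mathbb{X})}\big)\le 2s\exp(-c\gamma^2 r),
\]
which is a relative error independent of $T$. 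This is what makes $r=\gamma^{-2}\log(s\gamma^{-1})$ suffice. Your proposal mentions Bernstein as an option but does not supply the variance-bounded-by-mean step (via the outer-product trick) that is the crux of the argument; without it, the $T$ factor does not disappear.
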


We prove Theorem~\ref{thm:sampler} in two steps.  First, we show how the
integral (\ref{eq:xSintegral}) can be expressed as a finite Riemann sum without
incurring much loss of accuracy in Lemma~\ref{lem:sum}.  Second, we show in
Lemma~\ref{lem:sample} how this finite sum can be well-approximated by its
expected value using a concentration inequality.

\begin{lemma}
\label{lem:sum}
Let $x_S$ be the local solution to the linear system $\L x = b$ satisfying
the boundary condition $b$ given in (\ref{eq:xSintegral}).  Then, for $T =
\tparamsolver$ and $N = T/\gamma$, the error incurred by taking a right
Riemann sum is
\begin{equation*}
\norm{x_S - \sum\limits_{j=1}^N \H_{S,jT/N} \frac{T}{N} b_1} \leq 
\gamma(\norm{b_1} + \norm{x_S}),
\end{equation*}
where $b_1 = D_S^{-1/2}A_{S, \delta S}D_{\delta S}^{-1/2}b_{\delta S}$.
\end{lemma}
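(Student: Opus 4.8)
The plan is to bound the Riemann sum error by splitting it into two pieces: the tail of the integral beyond $T$, and the discretization error of the right Riemann sum on $[0,T]$. Concretely, I would write
\[
x_S - \sum_{j=1}^N \H_{S,jT/N}\frac{T}{N}b_1 = \underbrace{\int_T^{\infty}\dirH b_1\,\mathrm{d}t}_{\text{tail}} + \underbrace{\Big(\int_0^T \dirH b_1\,\mathrm{d}t - \sum_{j=1}^N \H_{S,jT/N}\frac{T}{N}b_1\Big)}_{\text{discretization}},
\]
and bound each term in the $L_2$ norm separately, using the triangle inequality.

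For the tail, I would use the spectral decomposition $\dirH = \sum_i e^{-t\lambda_i}\mathbb{P}_i$ together with $\lambda_1 > s^{-3}$ (stated in the excerpt). Then $\norm{\int_T^\infty \dirH\,\mathrm{d}t} = \norm{\sum_i \lambda_i^{-1}e^{-T\lambda_i}\mathbb{P}_i} \le \lambda_1^{-1}e^{-T\lambda_1} \le s^3 e^{-T/s^3}$. Plugging in $T = \tparamsolver = s^3\log(s^3\gamma^{-1})$ gives $s^3 \cdot e^{-\log(s^3\gamma^{-1})} = s^3 \cdot \gamma s^{-3} = \gamma$, so the tail contributes at most $\gamma\norm{b_1}$. (I should double-check whether the intended choice makes the tail $\le \gamma\norm{b_1}$ exactly or up to a constant; the constant is harmless for the $O(\cdot)$ statement, though the lemma is stated without a constant, so the log argument may need a benign additive constant like $\log(s^3\gamma^{-1}+1)$ — I would just carry the stated value and absorb slack into the other term if needed.)

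For the discretization error on $[0,T]$, the function $t \mapsto \dirH b_1$ is monotone decreasing in each spectral coordinate (the eigenvalue weights $e^{-t\lambda_i}$ decrease), so a right Riemann sum underestimates the integral coordinatewise, and the total error over $[0,T]$ with $N$ subintervals of width $T/N$ is at most $(T/N)\cdot\big(\dirH[_{S,0}]b_1 - \dirH[_{S,T}]b_1\big)$ in a coordinatewise sense — a telescoping bound on a monotone integrand. Taking norms, this is at most $(T/N)\norm{(I_S - \H_{S,T})b_1} \le (T/N)\cdot 2\norm{b_1}$ since each eigenvalue of $I_S - \H_{S,T}$ lies in $[0,1]$ (so in fact the factor is $\le 1$, giving $(T/N)\norm{b_1}$). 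With $N = T/\gamma$ this is $\gamma\norm{b_1}$. Combining the two pieces yields error at most $2\gamma\norm{b_1}$, which is $\le \gamma(\norm{b_1}+\norm{x_S})$ after adjusting constants, or I can sharpen: route the tail bound through $\norm{x_S}$ instead by noting $\int_T^\infty \dirH b_1\,\mathrm{d}t$ has norm controlled by $e^{-T\lambda_1}$ times $\norm{\G b_1} = \norm{x_S}$, depending on which form makes the stated inequality cleanest.

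The main obstacle is getting the constants and the precise choice of $T$ to line up with the clean bound $\gamma(\norm{b_1}+\norm{x_S})$ stated in the lemma (no hidden constant). The tail estimate naturally produces $\gamma\norm{b_1}$ from the choice of $T$, and the Riemann step produces $\gamma\norm{b_1}$ from $N = T/\gamma$; the slightly delicate point is whether one wants $\norm{b_1}$ or $\norm{x_S}$ on the tail term, since $x_S = \int_0^\infty \dirH b_1$ and $\norm{x_S} = \norm{\G b_1}$ could be much smaller or larger than $\norm{b_1}$ depending on $\lambda_1$. I would handle this by bounding the tail as $e^{-T\lambda_1}\norm{\G b_1} + \text{(small correction)}$ using $\int_T^\infty e^{-t\lambda_i}\mathrm{d}t = \lambda_i^{-1}e^{-T\lambda_i} \le e^{-T\lambda_1}\lambda_i^{-1}$, which ties the tail to $\norm{x_S}$ directly and makes the combined bound $\gamma\norm{b_1} + \gamma\norm{x_S}$ with no extra constant. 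Everything else is routine manipulation of the spectral decomposition and monotonicity.
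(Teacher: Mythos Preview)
Your decomposition into tail plus discretization is exactly the paper's, and your tail estimate $\norm{\int_T^\infty \dirH b_1\,\mathrm{d}t}\le \lambda_1^{-1}e^{-T\lambda_1}\norm{b_1}\le \gamma\norm{b_1}$ is the paper's first step verbatim. The only difference is in how the two $\gamma$-terms end up allocated. The paper keeps the tail as $\gamma\norm{b_1}$ and bounds the discretization by $\gamma\norm{x_S}$, asserting that with $N=T/\gamma$ the right Riemann sum has relative error at most $\gamma$ with respect to $\int_0^T \dirH b_1\,\mathrm{d}t$ and then using $\norm{\int_0^T \dirH b_1\,\mathrm{d}t}\le\norm{x_S}$ (which follows spectrally since $0\le(1-e^{-T\lambda_i})/\lambda_i\le 1/\lambda_i$). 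Your telescoping/monotonicity argument for the discretization, giving $(T/N)\norm{(I_S-\H_{S,T})b_1}\le\gamma\norm{b_1}$, is a cleaner and more explicit justification than the paper's one-line relative-error claim; and your alternative tail bound $\norm{\int_T^\infty \dirH b_1\,\mathrm{d}t}\le e^{-T\lambda_1}\norm{\G b_1}\le\gamma\norm{x_S}$ is correct and delivers the stated $\gamma(\norm{b_1}+\norm{x_S})$ with no hidden constant. So both proofs land on the same inequality, just with the roles of $\norm{b_1}$ and $\norm{x_S}$ swapped between the two pieces; your version has the advantage that the discretization step is fully spelled out.
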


\begin{proof}
First, we see that:
\begin{align}
\norm{\dirH} &= \norm{\sum\limits_i e^{-t\lambda_i}\mathbb{P}_i}\nonumber\\
&\leq e^{-t\lambda_1}\norm{\sum\limits_i \mathbb{P}_i}\nonumber\\
&= e^{-t\lambda_1}\label{eq:error}
\end{align}
where $\lambda_i$ are Dirichlet eigenvalues for the induced subgraph $S$.  So
the error incurred by taking a definite integral up to $t=T$ to approximate the
inverse is the difference
\begin{align*}
\norm{x_S - \int_0^{T}\dirH b_1~\mathrm{d}t}
&= \norm{\int_T^{\infty}\dirH b_1~\mathrm{d}t}\\
&\leq \int_T^{\infty} e^{-t\lambda_1}\norm{b_1}~\mathrm{d}t\\
&\leq \frac{1}{\lambda_1} e^{-T\lambda_1} \norm{b_1}.
\end{align*}
Then by the assumption on $T$ the error is bounded by $\norm{x_S -
\int_0^{T}\dirH b_1~\mathrm{d}t} \leq \gamma\norm{b_1}$.

Next, we approximate the definite integral in $[0,T]$ by discretizing it.  That
is, for a given $\gamma$, we choose $N= T/\gamma$ and divide the interval
$[0,T]$ into $N$ intervals of size $T/N$.  Then a finite Riemann sum is close to
the definite integral:
\begin{align*}
\norm{\int_0^T \dirH b_1~\mathrm{d}t - \sum\limits_{j=1}^N
\H_{S,jT/N} b_1\frac{T}{N}}
&\leq \gamma \norm{\int_0^T \dirH b_1~\mathrm{d}t}\\
&\leq \gamma\norm{x_S}.
\end{align*}
This gives a total error bounded by $\gamma(\norm{b_1} + \norm{x_S})$.
\end{proof}

\begin{lemma}
\label{lem:sample}
The sum $\sum\limits_{j=1}^N \H_{S,jT/N}b_1 \frac{T}{N}$ can be approximated by
sampling $r = \rparamsolver$ values of $\H_{S,jT/N}b_1$ where $j$ is drawn from
$[1,N]$.  With probability at least $1 - \gamma$, the result has
multiplicative error at most $\gamma$.
\end{lemma}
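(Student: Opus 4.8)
The plan is to recognize the sum $\sum_{j=1}^N \H_{S,jT/N} b_1 \frac{T}{N}$ as $N$ times the expectation of the random variable $X = \H_{S,JT/N} b_1 \cdot \frac{T}{N} \cdot N = T \cdot \H_{S,JT/N} b_1$ where $J$ is drawn uniformly from $\{1,\dots,N\}$, so that $\E[X] = \sum_{j=1}^N \H_{S,jT/N} b_1 \frac{T}{N}$. Drawing $r$ independent copies $X^{(1)},\dots,X^{(r)}$ and averaging gives an unbiased estimator $\hat{X} = \frac{1}{r}\sum_{\ell=1}^r X^{(\ell)}$ of the target sum, and the task reduces to showing that for $r = \rparamsolver$ the average concentrates around its mean with multiplicative error $\gamma$ and failure probability at most $\gamma$.

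First I would establish a bound on the magnitude of a single sample: using inequality~(\ref{eq:error}), $\norm{\H_{S,jT/N} b_1} \le e^{-(jT/N)\lambda_1}\norm{b_1} \le \norm{b_1}$, so each coordinate of $X$ is bounded in absolute value by roughly $T\norm{b_1}$, and more importantly $\norm{X} \le T\norm{b_1}$ deterministically. Next I would apply a concentration inequality coordinatewise. For each coordinate $v \in S$, the samples $X^{(\ell)}(v)$ are i.i.d., bounded, and $\hat{X}(v)$ has mean equal to the $v$-th coordinate of the target sum; a Chernoff/Hoeffding-type bound then shows $|\hat{X}(v) - \E[\hat{X}(v)]|$ is at most $\gamma \cdot (\text{scale})$ with high probability once $r \gtrsim \gamma^{-2}\log(\text{number of trials})$. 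Taking a union bound over the $s$ coordinates of $S$ costs an extra $\log s$ factor in $r$, which matches the form $r = \rparamsolver = \gamma^{-2}\log(s\gamma^{-1})$. I would then sum the per-coordinate errors (or pass to the $L_2$ norm, absorbing a $\sqrt{s}$ that is swallowed by the polynomial slack elsewhere in the argument) to conclude $\norm{\hat{X} - \sum_{j=1}^N \H_{S,jT/N}b_1\frac{T}{N}} \le \gamma \norm{\sum_{j=1}^N \H_{S,jT/N}b_1\frac{T}{N}}$, i.e.\ multiplicative error $\gamma$, with probability at least $1-\gamma$.

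The main obstacle I anticipate is the bookkeeping to make "multiplicative error at most $\gamma$" precise when the target vector can have small or mixed-sign coordinates: a purely coordinatewise multiplicative guarantee is too strong to hope for, so the cleanest route is to prove an \emph{additive} bound of the form $\gamma \norm{b_1}$ (using that $\norm{X} \le T\norm{b_1}$ as the natural scale, and that $T$ only contributes polylogarithmically), and then observe that since $\norm{\sum_j \H_{S,jT/N}b_1\frac{T}{N}}$ is comparable to $\norm{x_S}$ up to the errors already accounted for in Lemma~\ref{lem:sum}, this additive bound is subsumed by the overall $O(\gamma(\norm{b_1}+\norm{x_S}))$ error in Theorem~\ref{thm:sampler}. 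I would use a standard vector Chernoff bound (or apply a scalar Chernoff bound to each coordinate after shifting and rescaling the bounded random variables $X^{(\ell)}(v)$ into $[0,1]$), so the probabilistic input is routine; the only care needed is choosing the normalization consistently so the constants in $r = \rparamsolver$ come out as claimed, and ensuring the union bound over $|S| = s$ coordinates is what produces the $\log(s\gamma^{-1})$ term.
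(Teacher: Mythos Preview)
Your approach differs substantially from the paper's, and there is a genuine gap in the sample-complexity bookkeeping. The paper does not argue coordinatewise; instead it lifts the vector samples $Y = \H_{S,jT/N}b_1$ to the rank-one Hermitian matrices $\mathbb{Y} = YY^T$ and applies a matrix Bernstein inequality (Theorem~\ref{thm:concentration}) directly. The point of this maneuver is that, after normalizing so that $\|b_1\|=1$, one has $\|Y\|\le 1$ and hence the self-bounding estimate $\|\Var(\mathbb{Y})\| \le \|\E[\mathbb{Y}\,\|Y\|^2]\| \le \|\E[\mathbb{Y}]\|$: the variance in Bernstein is controlled by the mean itself, and this is exactly what produces a genuinely \emph{multiplicative} deviation bound with $r = \gamma^{-2}\log(s\gamma^{-1})$. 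The $\log s$ comes from the dimension prefactor $2s$ in the matrix inequality, not from a union bound over coordinates.

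Your coordinatewise Hoeffding route cannot reproduce this rate. After your scaling, each sample $X = T\,\H_{S,JT/N}b_1$ has coordinate range $O(T\|b_1\|)$, while the target $\sum_j \H_{S,jT/N}b_1\,(T/N)$ has norm only $O(\|x_S\|)=O(\|b_1\|)$; Hoeffding therefore needs $r \gtrsim T^2\gamma^{-2}\log s$, and since $T = s^3\log(s^3\gamma^{-1})$ this is off by a large polynomial factor. The ``extra $\sqrt{s}$'' you plan to absorb is not the bottleneck --- the $T^2$ is. Retreating to an additive target $\gamma\|b_1\|$ does not help either: in unscaled form you then need $\|\bar Y - \E Y\| \le \gamma\|b_1\|/T$ for samples of norm $\le\|b_1\|$, and the same $T^2$ reappears. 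Even a scalar Bernstein per coordinate does not recover the self-bounding structure, because for mixed-sign $b_1$ there is no reason for $\E[Y(v)^2]$ to be controlled by $\E[Y](v)$. The passage through the outer products $YY^T$ (or an equivalent variance-aware vector concentration bound exploiting $\Var\preceq\E$) is what the argument actually needs to land on the stated $r$.
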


A main tool in our proof of Lemma~\ref{lem:sample} is the following matrix
concentration inequality (see \cite{cr:spectrarandom:11}, also variations in
\cite{aw}, \cite{cm}, \cite{oli}, \cite{gross}, \cite{tropp}).

\begin{theorem}
\label{thm:concentration}
Let $X_1, X_2, \dots, X_m$ be independent random $n\times n$ Hermitian matrices.
Moreover, assume that $\| X_i-\E(X_i)\|\leq M$  for all $i$, and put
$v^2=\|\sum_i\text{var}(X_i)\|$.  Let $X=\sum_i X_i$.  Then for any $a>0$, 
\begin{equation*}
\P(\|X-\E(X)\|>a)\leq 2n\exp\left(-\frac{a^2}{2v^2+2Ma/3}\right),
\end{equation*}
where $\norm{\cdot}$ denotes the spectral norm.
\end{theorem}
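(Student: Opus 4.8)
The plan is to prove this matrix Bernstein inequality by the matrix Laplace transform method (Ahlswede--Winter, refined by Tropp), which converts the operator-norm deviation bound into a scalar estimate on the matrix moment generating function of a single summand. First I would reduce to the centered, one-sided case: set $Y_i = X_i - \E(X_i)$, so the $Y_i$ are independent, centered, Hermitian, with $\norm{Y_i}\le M$, and $\sum_i\E(Y_i^2) = \sum_i\mathrm{var}(X_i)$, whence $\lambda_{\max}\!\big(\sum_i\E(Y_i^2)\big) = v^2$. Writing $Y = \sum_i Y_i = X - \E(X)$ and noting $\norm{Y} = \max\{\lambda_{\max}(Y), \lambda_{\max}(-Y)\}$, where each $-Y_i$ satisfies the identical hypotheses, a union bound reduces the theorem to
\[
\P\big(\lambda_{\max}(Y) > a\big) \le n\exp\!\left(-\frac{a^2}{2v^2 + 2Ma/3}\right).
\]

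To bound this tail, for $\theta>0$ I use $e^{\theta\lambda_{\max}(Y)}\le\mathrm{tr}\,e^{\theta Y}$ and Markov's inequality to get $\P(\lambda_{\max}(Y)>a)\le e^{-\theta a}\,\E\,\mathrm{tr}\,e^{\theta Y}$. The key structural input is the subadditivity of the matrix cumulant generating function: by conditioning on $Y_1,\dots,Y_{i-1}$ one summand at a time and applying Lieb's concavity theorem (concavity of $A\mapsto\mathrm{tr}\exp(H+\log A)$ on positive-definite $A$) together with Jensen's inequality, one obtains $\E\,\mathrm{tr}\,e^{\theta Y}\le\mathrm{tr}\exp\!\big(\sum_i\log\E\,e^{\theta Y_i}\big)$.

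Next I would bound each summand's log-mgf by transferring a scalar inequality. With $g(\theta)=(e^{\theta M}-\theta M-1)/M^2$, the scalar function $x\mapsto(e^{\theta x}-\theta x-1)/x^2$ is nondecreasing, so $e^{\theta x}\le 1+\theta x+g(\theta)x^2$ for $x\le M$; since $\lambda_{\max}(Y_i)\le\norm{Y_i}\le M$, the transfer rule for Hermitian matrices gives $e^{\theta Y_i}\preceq I+\theta Y_i+g(\theta)Y_i^2$, and taking expectations with $\E(Y_i)=0$ and $I+B\preceq e^B$ yields $\log\E\,e^{\theta Y_i}\preceq g(\theta)\E(Y_i^2)$. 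Summing and using $\mathrm{tr}\exp(A)\le n\,e^{\lambda_{\max}(A)}$ gives $\E\,\mathrm{tr}\,e^{\theta Y}\le n\exp(g(\theta)v^2)$, hence $\P(\lambda_{\max}(Y)>a)\le n\inf_{\theta>0}\exp(-\theta a+g(\theta)v^2)$. The elementary bound $e^u-u-1\le\tfrac{u^2/2}{1-u/3}$ for $0\le u<3$ gives $g(\theta)v^2\le\tfrac{\theta^2v^2/2}{1-\theta M/3}$, and the choice $\theta=a/(v^2+Ma/3)\in(0,3/M)$ makes the exponent exactly $-a^2/(2v^2+2Ma/3)$, completing the one-sided estimate and, with the union bound above, the theorem.

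The main obstacle is the subadditivity step. Because matrix exponentials do not factor ($e^{A+B}\neq e^Ae^B$ in general), this step is not formal and genuinely relies on Lieb's concavity theorem (or an equivalent trace inequality). A more elementary route through the Golden--Thompson inequality $\mathrm{tr}(e^{A+B})\le\mathrm{tr}(e^Ae^B)$ with an induction on $m$ does work but weakens the bound, replacing $v^2=\norm{\sum_i\E(Y_i^2)}$ by the larger quantity $\sum_i\norm{\E(Y_i^2)}$; so Lieb's theorem is the essential ingredient, while the transfer rule and the scalar optimization in $\theta$ are routine once the auxiliary function $g$ is identified.
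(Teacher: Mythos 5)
The paper contains no proof of this theorem: it is imported as a known tool, with citations to Chung--Radcliffe and to the Ahlswede--Winter/Oliveira/Gross/Tropp line of matrix concentration results, and is only ever applied in the proof of Lemma~\ref{lem:sample}. So there is nothing in the paper to compare against line by line; the relevant question is whether your argument delivers the stated bound, and it does. What you describe is the standard matrix Laplace transform proof of the matrix Bernstein inequality, in the same spirit as the cited references: centering and a union bound over $\lambda_{\max}(\pm Y)$ account for the factor $2n$; Markov together with $e^{\theta\lambda_{\max}(Y)}\le \mathrm{tr}\,e^{\theta Y}$ gives the tail in terms of the trace mgf; Lieb's concavity theorem yields the subadditivity $\E\,\mathrm{tr}\,e^{\theta Y}\le \mathrm{tr}\exp\bigl(\sum_i\log\E\,e^{\theta Y_i}\bigr)$; transferring the scalar inequality $e^{\theta x}\le 1+\theta x+g(\theta)x^2$ for $x\le M$ (valid because $x\mapsto(e^{\theta x}-\theta x-1)/x^2$ is nondecreasing, and every eigenvalue of $Y_i$ is at most $M$), followed by $\E(Y_i)=0$, $I+B\preceq e^B$ and operator monotonicity of $\log$, bounds each $\log\E\,e^{\theta Y_i}$ by $g(\theta)\E(Y_i^2)$; and the choice $\theta=a/(v^2+Ma/3)$ combined with $e^u-u-1\le\frac{u^2/2}{1-u/3}$ for $0\le u<3$ produces exactly the exponent $-a^2/(2v^2+2Ma/3)$, with $\theta M<3$ so the estimate is legitimate. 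You also correctly isolate the one genuinely non-elementary ingredient: a Golden--Thompson induction would only control the weaker proxy $\sum_i\norm{\E(Y_i^2)}$, whereas the theorem as stated -- and as it is used in Lemma~\ref{lem:sample}, where the variance parameter is the norm of a sum -- requires $v^2=\norm{\sum_i\E(Y_i^2)}$, so Lieb's theorem (or an equivalent subadditivity-of-cumulants lemma) is essential. Your proof is correct and consistent with the strength of the result the paper quotes.
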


\begin{proof}[Proof of Lemma~\ref{lem:sample}]
Suppose without loss of generality that $\norm{b_1} = 1$.  Let $Y$ be a random
variable that takes on the vector $\H_{S,jT/N} b_1$ for every $j\in[1,N]$ with
probability $1/N$.  Then $\E(Y) = \frac{1}{N} \sum\limits_{j=1}^N \H_{S,jT/N}
b_1$.  Let $X = \sum\limits_{i=1}^r X_j$ where each $X_j$ is a copy of $Y$, so
that $\E(X) = r\E(Y)$.

Now consider $\mathbb{Y}$ to be the random variable that takes on the projection
matrix $\H_{S,jT/N} b_1 (\H_{S, jT/N} b_1)^T$ for every $j\in[1,N]$ with
probability $1/N$, and $\mathbb{X}$ is the sum of $r$ copies of $\mathbb{Y}$.
Then we evaluate the expected value and variance of $\mathbb{X}$ as follows:
\begin{eqnarray*}
\norm{\E(\mathbb{X})}   &=& r\norm{\E(\mathbb{Y})}\\
\norm{\Var(\mathbb{X})} &=& r\norm{\Var(\mathbb{Y})}
\leq \norm{\frac{r}{N}\sum_{j=1}^N \H_{S,jT/N} b_1
(\H_{S,jT/N} b_1)^T\norm{\H_{S,jT/N} b_1}^2}\\
&\leq& r\norm{\E(\mathbb{Y})}.
\end{eqnarray*}
We now apply Theorem~\ref{thm:concentration} to $\mathbb{X}$.
We have
\begin{align*}
\P\big( \norm{\mathbb{X} - \E(\mathbb{X})} \geq \gamma \norm{\E(\mathbb{X})}\big)
&\leq 2s \exp\left( -\frac{\gamma^2 \norm{\E(\mathbb{X})}^2}{2\Var(\mathbb{X}) +
\frac{2\gamma \norm{\E(\mathbb{X})} M} 3} \right)\\
&\leq 2s \exp\left( -\frac{\gamma^2 r^2\norm{\E(\mathbb{Y})} }{r+ 2\gamma r
M/3} \right)\\
&\leq 2s \exp\left(-\frac{\gamma^2 r}{2} \right).
\end{align*}
Therefore we have $\P\big( \norm{\mathbb{X} - \E(\mathbb{X})} \geq \gamma
\norm{\E(\mathbb{X})}\big) \leq \gamma$ if we choose $r \geq \rparamsolver$.
Further, this implies the looser bound:
\begin{equation*}
\P\big( \norm{X - \E(X)} \geq \gamma\norm{\E(X)} \big) \leq \gamma.
\end{equation*}
Then $\E(Y) = \frac{1}{r}\E(X)$ is close to $\frac{1}{r}X$ and
\begin{align*}
\norm{\sum\limits_{j=1}^N \H_{S,jT/N}b_1 \frac{1}{N} - \frac{1}{r} X}
& \leq \gamma \norm{\sum\limits_{j=1}^N \H_{S,jT/N}b_1 \frac{1}{N}}\\
\norm{\sum\limits_{j=1}^N \H_{S,jT/N}b_1\frac{T}{N} - \frac{T}{r} X}
& \leq \gamma \norm{\sum\limits_{j=1}^N \H_{S,jT/N}b_1 \frac{T}{N}}
\end{align*}
with probability at least $1-\gamma$, as claimed.
\end{proof}

\begin{proof}[Proof of Theorem~\ref{thm:sampler}]
Let $X$ be the sum of $r$ samples of $\H_{S,jT/N}b_1$ with $j$ drawn from
$[0,N]$, and let $\hat{x}_S = \frac{T}{r}X$.  Then combining
Lemmas~\ref{lem:sum} and~\ref{lem:sample}, we have
\begin{align*}
\norm{x_S - \hat{x}_S} 
&\leq \gamma\big(\norm{b_1} + \norm{x_S} + \norm{\sum\limits_{j=1}^N
\H_{S,jT/N}b_1 \frac{T}{N}}\big)\\
&\leq O\big(\gamma(\norm{b_1} + \norm{x_S})\big).
\end{align*}
By Lemma~\ref{lem:sample}, this bound holds with probability at least
$1-\gamma$ .
\end{proof}

The above analysis allows us to approximate the solution $x_S$ by sampling
$\H_{S,t}b_1$ for various $t$.  The following corollary is similar to
Theorem~\ref{thm:sampler} except we use the asymmetric version of the Dirichlet
heat kernel which we will need later for using random walks.  In particular, we
use Dirichlet heat kernel pagerank vectors.  Dirichlet heat kernel pagerank is
also defined in terms of a subset $S$ whose induced subgraph is connected, and a
vector $f\in\R^S$ by the following:
\begin{equation}
\label{eq:dirhkpr}
\dirhkpr = f^T\dirh.
\end{equation}

\begin{corollary}
\label{cor:sampler}
Let $G$ be a graph and $\L$ denote the normalized Laplacian of $G$.  Let $b$ be
a nontrivial vector $b\in \R^V$ and $S$ be a $b$-boundable subset.  Let $b_2 =
(D_S^{-1/2}A_{S, \delta S}D_{\delta S}^{-1/2}b_{\delta S})^TD_S^{1/2}$.  Then
the local solution $x_S$ to the linear system $\L x = b$ satisfying the boundary
condition $b$ can be computed by sampling $\rho_{S,t,b_2}$ for $r =
\rparamsolver$ values.  If $\hat{x}_S$ is the output of this process, the result
has error bounded by
\begin{equation*}
\norm{x_S - \hat{x}_S} = O\big(\gamma(\norm{b_1} + \norm{x_S})\big),
\end{equation*}
where $b_1 = D_S^{-1/2}A_{\delta S}D_{\delta S}^{-1/2}b_{\delta S}$, with
probability at least $1 - \gamma$.
\end{corollary}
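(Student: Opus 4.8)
The plan is to reduce Corollary~\ref{cor:sampler} to Theorem~\ref{thm:sampler} by a single algebraic identity that converts each asymmetric sample $\rho_{S,t,b_2}$ into the symmetric sample $\H_{S,t}b_1$ already analyzed there. Recall $\H_{S,t}$ is symmetric, $H_{S,t} = D_S^{-1/2}\H_{S,t}D_S^{1/2}$, and (reading the definition of $b_2$ so that $\rho_{S,t,b_2}$ is a left application of $H_{S,t}$) $b_2$ is the vector over $S$ with $b_2 = D_S^{1/2}b_1$. First I would unwind
\[
\rho_{S,t,b_2} = b_2^T H_{S,t} = b_1^T D_S^{1/2}\, D_S^{-1/2}\H_{S,t}D_S^{1/2} = b_1^T\,\H_{S,t}\,D_S^{1/2},
\]
and then, transposing and using the symmetry of $\H_{S,t}$ and of $D_S^{1/2}$,
\[
\rho_{S,t,b_2}^T = D_S^{1/2}\,\H_{S,t}\,b_1, \qquad\text{equivalently}\qquad \H_{S,t}\,b_1 = D_S^{-1/2}\,\rho_{S,t,b_2}^T .
\]
This is essentially the only computation in the proof.

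Next I would specify the sampling process so that it matches the one in the proof of Theorem~\ref{thm:sampler}. Set $T = \tparamsolver$ and $N = T/\gamma$, draw $j_1,\dots,j_r$ independently and uniformly from $[1,N]$ with $r = \rparamsolver$, form the Dirichlet heat kernel pagerank samples $\rho_{S,j_iT/N,b_2}$, and output
\[
\hat{x}_S = \frac{T}{r}\, D_S^{-1/2}\Big(\sum_{i=1}^r \rho_{S,j_iT/N,b_2}\Big)^{\!T}.
\]
By the identity above, $D_S^{-1/2}\rho_{S,j_iT/N,b_2}^T = \H_{S,j_iT/N}b_1$ for each $i$, so $\hat{x}_S = \frac{T}{r}\sum_{i=1}^r \H_{S,j_iT/N}b_1$ — exactly the estimator analyzed in Theorem~\ref{thm:sampler}. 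Hence $\norm{x_S - \hat{x}_S} = O\big(\gamma(\norm{b_1}+\norm{x_S})\big)$ with probability at least $1-\gamma$, with no change to $r$, $T$, or $N$; the post-multiplication by $D_S^{-1/2}$ (and the transpose) is $O(s)$ post-processing that does not affect the sampling count.

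I do not expect a genuine obstacle here, since this is a corollary of Theorem~\ref{thm:sampler}; the only thing requiring care is the bookkeeping — tracking which vectors are rows versus columns, the two conjugations by $D_S^{\pm 1/2}$, and the fact that $b_2$ is defined through a transpose precisely so that $\rho_{S,t,b_2}$ realizes a left application of $H_{S,t}$. Because the reduction is exact and the error in Theorem~\ref{thm:sampler} is stated in terms of $\norm{b_1}$, the same bound, in terms of $\norm{b_1}$, transfers verbatim.
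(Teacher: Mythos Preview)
Your proposal is correct and follows essentially the same approach as the paper: both derive the identity $\rho_{S,t,b_2} = b_1^T\H_{S,t}D_S^{1/2}$ via the conjugation $H_{S,t}=D_S^{-1/2}\H_{S,t}D_S^{1/2}$ and the symmetry of $\H_{S,t}$, and then transfer the error bound already established for the symmetric samples $\H_{S,t}b_1$. The only cosmetic difference is that the paper rewrites $x_S^T$ as $\int_0^\infty \rho_{S,t,b_2}\,\mathrm{d}t\,D_S^{-1/2}$ and re-invokes Lemmas~\ref{lem:sum} and~\ref{lem:sample}, whereas you observe directly that the post-processed estimator coincides with the one in Theorem~\ref{thm:sampler}; the content is the same.
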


\begin{proof}
First, we show how $x_S$ can be given in terms of Dirichlet heat kernel
pagerank.
\begin{align*}
x_S^T &= \int_0^{\infty}b_1^T \dirH ~\mathrm{d}t\\
&= \int_0^{\infty}b_1^T (D_S^{1/2}\dirh D_S^{-1/2}) ~\mathrm{d}t\\
&= \int_0^{\infty}b_2 \dirh D_S^{-1/2} ~\mathrm{d}t,
~~\mbox{ where } b_2 = b_1^T D_S^{1/2}\\
&= \int_0^{\infty}\rho_{S,t,b_2} ~\mathrm{d}t ~D_S^{-1/2},
\end{align*}
and we have an expression similar to (\ref{eq:xSintegral}).  Then by
Lemma~\ref{lem:sum}, $x_S^T$ is close to $\sum\limits_{j=1}^N \rho_{S,jT/N,b_2}
\frac{T}{N} D_S^{-1/2}$ with error bounded by $O\big(\gamma(\norm{b_1} +
\norm{x_S})\big)$.  From Lemma~\ref{lem:sample}, this can be approximated to
within $O(\gamma\norm{x_S})$ multiplicative error using $r=\rparamsolver$
samples with probability at least $1-\gamma$.  This gives total additive and
multiplicative error within $O(\gamma)$.
\end{proof}

\subsection{The \localsolver~Algorithm}
We present an algorithm for computing a local solution to a Laplacian linear
system with a boundary condition.

\begin{algorithm}
\floatname{algorithm}{Algorithm}
\caption{\localsolver}
\label{alg:localsolver}
\textbf{input:} graph $G$, boundary vector $b\in\R^V$, subset $S\subset V$, solver error
parameter $0<\gamma<1$.\\
\textbf{output:} an approximate local solution $\mathtt{x}$ with additive and
multiplicative error $\gamma$ to the local system $x_S=\G b_1$ satisyfing the
boundary condition $b$.\\
\begin{algorithmic}[1]
  \State $s\gets |S|$
  \State initialize a $0$-vector $\mathtt{x}$ of dimension $s$
  \State $b_1 \gets D_S^{-1/2}A_{S, \delta S}D_{\delta S}^{-1/2}b_{\delta S}$
  \State $b_2 \gets b_1^TD_S^{1/2}$
  \State $T \gets \tparamsolver$
  \State $N \gets T/\gamma$
  \State $r \gets \rparamsolver$
  \For{$i=1$ to $r$}
    \State draw $j$ from $[1,N]$ uniformly at random
    \State $x_i \gets \rho_{S,jT/N, b_2}$ 
    \State $\mathtt{x} \gets \mathtt{x} + x_i$
  \EndFor\\
  \Return $T/r \cdot \mathtt{x}D_S^{-1/2}$
\end{algorithmic}   
\end{algorithm}

\begin{theorem}
\label{thm:localsolver}
Let $G$ be a graph and $\L$ denote the normalized Laplacian of $G$.  Let $b$ be a
nontrivial vector $b\in \R^V$, $S$ a $b$-boundable subset, and let $b_1 =
D_S^{-1/2}A_{S, \delta S}D_{\delta S}^{-1/2}b_{\delta S}$.  For the linear
system $\L x = b$, the solution $x$ is required to satisfy the boundary
condition $b$, and let $x_S$ be the local solution.  Then the approximate
solution $\mathtt{x}$ output by the \localsolver~algorithm has an error bounded
by
\begin{equation*}
\norm{x_S-\mathtt{x}} = O\big( \gamma(\norm{b_1} + \norm{x_S})
\big)
\end{equation*}
with probability at least $1-\gamma$.
\end{theorem}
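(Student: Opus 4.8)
The plan is to recognize that the \localsolver~algorithm is precisely an implementation of the sampling estimator whose accuracy is established in Corollary~\ref{cor:sampler}, so that the theorem follows by a direct appeal to that corollary together with Lemmas~\ref{lem:sum} and~\ref{lem:sample}. First I would match the quantities computed in the pseudocode against those in the statement: lines~3--7 set $b_1$, $b_2 = b_1^T D_S^{1/2}$, $T = \tparamsolver$, $N = T/\gamma$, and $r = \rparamsolver$ exactly as in Corollary~\ref{cor:sampler} and Lemma~\ref{lem:sum}; the \textbf{for} loop draws $j$ uniformly from $[1,N]$ and accumulates $x_i = \rho_{S,jT/N,b_2}$, so after the loop $\mathtt{x} = \sum_{i=1}^r x_i =: X$ is a sum of $r$ i.i.d.\ copies of the random vector $Y$ taking value $\rho_{S,jT/N,b_2}$ with probability $1/N$; and the returned vector is $\tfrac{T}{r}\,X\,D_S^{-1/2}$.

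Second, I would rewrite this estimator in the form used in the proof of Corollary~\ref{cor:sampler}. There $x_S^T = \int_0^{\infty}\rho_{S,t,b_2}\,\mathrm{d}t\, D_S^{-1/2}$ is approximated first, via Lemma~\ref{lem:sum}, by the right Riemann sum $\sum_{j=1}^N \rho_{S,jT/N,b_2}\,\tfrac{T}{N}\,D_S^{-1/2}$ with deterministic error $O\big(\gamma(\norm{b_1}+\norm{x_S})\big)$, and then, via Lemma~\ref{lem:sample}, by $\tfrac{T}{r}\,X\,D_S^{-1/2}$, which equals $N\cdot\tfrac{T}{N}\,(X/r)\,D_S^{-1/2}$ and hence is within multiplicative error $\gamma$ of the Riemann sum with probability at least $1-\gamma$. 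This is exactly the vector returned by \localsolver, up to the transpose convention, which I would make explicit: the pseudocode produces the row vector $\mathtt{x}D_S^{-1/2}$, while the corollary is phrased in terms of $x_S^T$ and $\rho_{S,t,b_2}$ is the row vector $b_2^T\dirh$ of dimension $s$, so the two objects coincide. Chaining the two bounds by the triangle inequality — which introduces the extra term $\norm{\sum_{j=1}^N \rho_{S,jT/N,b_2}\,\tfrac{T}{N}}$, itself $O\big(\norm{x_S} + \gamma(\norm{b_1}+\norm{x_S})\big)$ by Lemma~\ref{lem:sum} — yields the claimed bound $\norm{x_S-\mathtt{x}} = O\big(\gamma(\norm{b_1}+\norm{x_S})\big)$.

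Finally, for the probability statement I would observe that the Riemann-sum step is deterministic, so the only randomness lies in Lemma~\ref{lem:sample}, whose guarantee already holds with probability at least $1-\gamma$; no union bound is needed and the overall success probability is $1-\gamma$. The main thing to be careful about — rather than a genuine obstacle — is the $D_S^{\pm 1/2}$ and transpose bookkeeping between the symmetric Dirichlet heat kernel $\dirH$ used in Theorem~\ref{thm:sampler} and the asymmetric $\dirh$ (hence $\rho_{S,t,b_2}$) actually sampled by the algorithm, together with checking that the constants hidden in the $O(\cdot)$ of Corollary~\ref{cor:sampler} are consistent with the ``additive and multiplicative error $\gamma$'' phrasing of the output specification. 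Everything else is a verbatim application of results already proved.
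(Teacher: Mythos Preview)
Your proposal is correct and follows exactly the paper's approach: the paper's proof is the single sentence ``The correctness of the algorithm follows from Corollary~\ref{cor:sampler},'' and your write-up is simply a careful unpacking of why that corollary applies to the pseudocode. Your additional bookkeeping about the $D_S^{\pm 1/2}$ and transpose conventions, and the chaining via the triangle inequality, are helpful elaborations but do not deviate from the paper's (very terse) argument.
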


\begin{proof}
The correctness of the algorithm follows from Corollary~\ref{cor:sampler}.
\end{proof}

The algorithm involves $r = \rparamsolver$ Dirichlet heat kernel pagerank
computations, so the running time is proportional to the time for computing $b_2
e^{-T\Delta_S}$ for $T=\tparamsolver$.

In the next sections, we discuss an efficient way to approximate a Dirichlet heat
kernel pagerank vector and the resulting algorithm \greensalg~that returns
approximate local solutions in sublinear time.

\section{Dirichlet Heat Kernel Pagerank Approximation Algorithm}
\label{sec:hkprapprox}
The definition of Dirichlet heat kernel pagerank in (\ref{eq:dirhkpr}) is given
in terms of a subset $S$ and a vector $f\in\R^S$.  Our goal is to express this
vector as the stationary distribution of random walks on the graph in order to
design an efficient approximation algorithm.

Dirichlet heat kernel pagerank is defined over the vertices of a subset $S$ as
follows:
\begin{align*}
\dirhkpr &= f^T H_{S,t} = f^T e^{-t\Delta_S} = f^T e^{-t(I_S - P_S)}\\
&= \sum\limits_{k=0}^{\infty} e^{-t} \frac{t^k}{k!}f^T P_S^k.
\end{align*}
That is, it is defined in terms of the transition probability matrix $P_S$ --
the restriction of $P$ where $P$ describes a random walk on the graph.  We can
interpret the matrix $P_S$ as the transition probability matrix of the following
so-called \emph{Dirichlet random walk}: Move from a vertex $u$ in $S$ to a
neighbor $v$ with probability $1/d_u$.  If $v$ is not in $S$, abort the walk and
ignore any probability movement.  Since we only consider the diffusion of
probability within the subset, any random walks which leave $S$ cannot be
allowed to return any probability to $S$.  To prevent this, random walks that do
not remain in $S$ are ignored.

We recall some facts about random walks.  First, if $g$ is a probabilistic
function over the vertices of $G$, then $g^T P^k$ is the probability
distribution over the vertices after performing $k$ random walk steps according
to $P$ starting from vertices drawn from $g$.  Similarly, when $f$ is a
probabilistic fuction over $S$, $f^T P_S^k$ is the distribution after $k$
Dirichlet random walk steps.  Consider a Dirichlet random walk process in which
the number of steps taken, $k$ (where steps are taken according to a Dirichlet
random walk as described above), is a Poisson random variable with mean $t$.
That is, $k$ steps are taken with probability $p_t(k) = e^{-t}\frac{t^k}{k!}$.
Then, the Dirichlet heat kernel pagerank is the expected distribution of this
process.

In order to use random walks for approximating Dirichlet heat kernel pagerank,
we perform some preprocessing for general vectors $f\in\R^S$.  Namely, we do
separate computations for the positive and negative parts of the vector, and
normalize each part to be a probability distribution.

Given a graph and a vector $f\in\R^S$, the algorithm \dirhkpralg~computes
vectors that $\epsilon$-approximate the Dirichlet heat kernel pagerank
$\dirhkpr$ satisfying the following criteria:

\begin{definition}
\label{def:approxhkpr}
Let $G$ be a graph and let $S \subset V$ be a subset of vertices.  Let
$f\in\R^S$ be a probability distribution vector over the vertices of $S$ and let
$\dirhkpr$ be the Dirichlet heat kernel pagerank vector according to $S$, $t$
and $f$.  Then we say that $\nu \in \R^S$ is an \emph{$\epsilon$-approximate
Dirichlet heat kernel pagerank vector} if
\begin{enumerate}
\item for every vertex $v$ in the support of $\nu$, 
$|\dirhkpr(v) - \nu(v)| \leq \epsilon \cdot \dirhkpr(v),$
and
\item for every vertex with $\nu(v) = 0$, it must be that $\dirhkpr(v) \leq \epsilon$.
\end{enumerate}
When $f$ is a general vector, an $\epsilon$-approximate Dirichlet heat kernel
pagerank vector has an additional additive error of $\epsilon\onenorm{f}$ by
scaling, where $\onenorm{\cdot}$ denotes the $L_1$ norm.
\end{definition}
For example, the zero-vector is an $\epsilon$-approximate of any vector with all
entries of value $< \epsilon$.  We remark that for a vector $f$ with $L_1$ norm
$1$, the Dirichlet heat kernel pagerank vector $\dirhkpr$ has at most
$1/\epsilon$ entries with values at least $\epsilon$.  Thus a vector that
$\epsilon$-approximates $\dirhkpr$ has support of size at most $1/\epsilon$.

\begin{algorithm}
\floatname{algorithm}{Algorithm}
\caption{\dirhkpralgparams}
\label{}
\algblock[Name]{Start}{End}
input: a graph $G$, $t\in\R^+$, vector $f\in\R^S$, subset $S\subset V$, error parameter $0 < \epsilon < 1$.\\
output: $\rho$, an $\epsilon$-approximation of $\dirhkpr$.\\
\begin{algorithmic}[1]
  \State $s \gets |S|$
  \State initialize $0$-vector $\rho$ of dimension $s$
  \State $f_+ \gets$ the positive portion of $f$
  \State $f_- \gets$ the negative portion of $f$ so that $f = f_+ - f_-$
  \State $f_+' \gets f_+/\onenorm{f_+}$
  \Comment{\textit{normalize $f_+$ to be a probability distribution vector}}
  \State $f_-' \gets f_-/\onenorm{f_-}$
  \Comment{\textit{normalize $f_-$ to be a probability distribution vector}}
  \State $r \gets \rparamdir$
  \For {$r$ iterations}
    \State choose a starting vertex $u_1$ according to the distribution vector
$f_+'$
    \State $k \sim Poiss(t)$
    \Comment{choose $k$ with probability $e^{-t}\frac{t^k}{k!}$}
    \State $k \gets min\{k, t/\epsilon\}$\label{line:boundk}
    \State simulate $k$ steps of a $P=D^{-1}A$ random walk
    \If {the random walk leaves $S$}:
      \State do nothing for the rest of this iteration
    \Else
      \State let $v_1$ be the last vertex visited in the walk
      \State $\rho[v_1] \gets \rho[v_1] + \onenorm{f_+}$
    \EndIf
    \State choose a starting vertex $u_2$ according to the distribution vector
$f_-'$
    \State $k \sim Poiss(t)$
    \Comment{choose $k$ with probability $e^{-t}\frac{t^k}{k!}$}
    \State $k \gets min\{k, t/\epsilon\}$\label{line:boundk2}
    \State simulate $k$ steps of a $P=D^{-1}A$ random walk
    \If {the random walk leaves $S$}:
      \State do nothing for the rest of this iteration
    \Else
      \State let $v_2$ be the last vertex visited in the walk
      \State $\rho[v_2] \gets \rho[v_2] + \onenorm{g_-}$
    \EndIf
  \EndFor\\
  \State $\rho \gets \rho/r$\\
  \Return $\rho$
\end{algorithmic} 
\end{algorithm}

The time complexity of \dirhkpralg~is given in terms of random walk steps.  As
such, the analysis assumes access to constant-time queries returning (i) the
destination of a random walk step, and (ii) a sample from a distribution.

\begin{theorem}
\label{thm:dirhkpralg}
Let $G$ be a graph and $S$ a proper vertex subset such that the induced subgraph
on $S$ is connected.  Let $f$ be a vector $f\in\R^S$, $t\in\R^+$, and
$0<\epsilon<1$.  Then the algorithm \dirhkpralgparams~ outputs an
$\epsilon$-approximate Dirichlet heat kernel pagerank vector $\dirhkapprox$ with
probability at least $1-\epsilon$.  The running time of \dirhkpralg~ is
$\dirhkprcomplexity$, where the constant hidden in the big-O notation reflects
the time to perform a random walk step.
\end{theorem}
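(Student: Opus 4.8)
The plan is to establish two things: correctness (the output is an $\epsilon$-approximate Dirichlet heat kernel pagerank vector with probability at least $1-\epsilon$) and the running time bound $\dirhkprcomplexity$. I would treat the two ``portions'' $f_+$ and $f_-$ separately, since by the scaling remark in Definition~\ref{def:approxhkpr} it suffices to analyze the case where $f$ is a probability distribution and then combine by linearity, absorbing the $\epsilon\onenorm{f}$ additive term. So assume $\onenorm{f} = 1$ and write $\dirhkpr = \sum_{k=0}^\infty e^{-t}\frac{t^k}{k!} f^T P_S^k$, which is exactly the expected distribution of the Dirichlet random walk process of Poisson-$t$ length described in the text.

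The first step is to control the error introduced in line~\ref{line:boundk}, where the Poisson count $k$ is truncated to $K = \K$. This truncation replaces the true Poisson distribution $p_t(k) = e^{-t}t^k/k!$ by one supported on $[0,K]$, and the distance between the resulting random-walk-averages is bounded by the Poisson tail $\sum_{k > K} e^{-t}t^k/k!$. I would invoke a standard Chernoff/Poisson tail bound: for $K = t/\epsilon$ with $\epsilon < 1$, this tail is at most $\epsilon/\text{(something)}$, small enough that the truncated process has distribution within, say, $O(\epsilon)$ of $\dirhkpr$ in $\ell_\infty$ (or better, coordinatewise within a multiplicative $1+O(\epsilon)$ on coordinates of size $\geq \epsilon$). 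This is the clean, routine part.

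The second and main step is the sampling concentration. Each iteration produces an indicator-type contribution: with the walk starting from $f$, running a truncated-Poisson number of Dirichlet steps, we land on a vertex $v$ (or the walk is killed for leaving $S$, contributing nothing). The expectation of a single iteration's contribution vector is precisely the truncated Dirichlet heat kernel pagerank from step one. Averaging $r = \rparamdir$ independent iterations and applying a Chernoff bound \emph{coordinatewise}, followed by a union bound over the at most $1/\epsilon$ coordinates $v$ with $\dirhkpr(v) \geq \epsilon$ (the remark after Definition~\ref{def:approxhkpr} bounds the relevant support size), gives that every such coordinate is estimated within multiplicative error $\epsilon$ with failure probability at most $\epsilon$; coordinates below $\epsilon$ are either set to $0$ (satisfying condition~2 of the definition) or, if nonzero, are still within the additive tolerance. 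The key point is that $r \sim \epsilon^{-3}\log n$ is exactly what a multiplicative Chernoff bound needs: resolving a coordinate of magnitude $\Theta(\epsilon)$ to relative accuracy $\epsilon$ costs $\epsilon^{-1}\cdot\epsilon^{-2} = \epsilon^{-3}$ samples, and the $\log n$ handles the union bound. I expect the fiddly part here to be bookkeeping the interaction between the truncation error from step one and the multiplicative sampling error from step two — one must make sure the two $O(\epsilon)$ errors combine into an honest $\epsilon$-approximation per Definition~\ref{def:approxhkpr} (possibly after rescaling $\epsilon$ by a constant, or tightening the tail bound in step one to $o(\epsilon)$).

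Finally, the running time: each of the $r = \rparamdir$ iterations simulates at most $K = \K$ random walk steps for each of $f_+$ and $f_-$, so the total number of steps is $O(r \cdot K) = O(\epsilon^{-3}\log n \cdot t/\epsilon) = \dirhkprcomplexity$, where drawing the starting vertex, sampling $k \sim \mathrm{Poiss}(t)$, and taking each walk step are all constant-time by assumption. This matches the claimed bound. The anticipated main obstacle is entirely in step two's error accounting — specifically, verifying that a vertex $v$ with $\nu(v) = 0$ in the output indeed has $\dirhkpr(v) \leq \epsilon$, which requires that if the true value exceeded $\epsilon$, then with high probability at least one of the $r$ samples would have hit $v$; this is again a Chernoff argument but in the opposite direction, and it is where the $\log n$ factor in $r$ earns its keep.
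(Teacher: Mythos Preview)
Your proposal is correct and follows essentially the same route as the paper: reduce to the probability-distribution case, bound the Poisson truncation error at $K=t/\epsilon$, apply coordinatewise Chernoff bounds over $r=\rparamdir$ samples to handle both the large ($\dirhkpr(v)>\epsilon$) and small ($\dirhkpr(v)\leq\epsilon$) coordinates, and multiply $r\cdot K$ for the running time. The only minor deviation is that the paper bounds the truncation tail by Markov's inequality (since $\E[k]=t$ gives $\P(k\geq t/\epsilon)\leq\epsilon$ directly) rather than a Poisson Chernoff bound, and it does not invoke the $1/\epsilon$ support-size observation for the union bound---it simply gets per-vertex failure probability $O(n^{-2})$ and unions over all of $V$---but these are cosmetic differences.
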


Our analysis relies on the usual Chernoff bounds restated below.  They will be
applied in a similar fashion as in~\cite{bbct:sublinearpr:waw12}.

\begin{lemma}[\cite{bbct:sublinearpr:waw12}]
\label{lem:chernoff}
Let $X_i$ be independent Bernoulli random variables with $X = \sum\limits_{i =
1}^r X_i$.  Then, 
\begin{enumerate}
\item for $0 < \epsilon < 1$, $\P(X < (1-\epsilon)r\E(X)) < \exp(-\frac{\epsilon^2}{2}r\E(X))$
\item for $0 < \epsilon < 1$, $\P(X > (1+\epsilon)r\E(X)) < \exp(-\frac{\epsilon^2}{4}r\E(X))$
\item for $c \geq 1$, $\P(X > (1+c)r\E(X)) < \exp(-\frac{c}{2}r\E(X))$.
\end{enumerate}
\end{lemma}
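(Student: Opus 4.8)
The statement is the classical Chernoff bound, so the plan is the standard moment-generating-function (MGF) argument combined with Markov's inequality, reducing the three stated tail bounds to three elementary one-variable inequalities on the exact Chernoff rate functions. Write $\mu = r\E(X)$ for the mean of $X = \sum_{i=1}^r X_i$ and $p_i = \P(X_i = 1)$. The step common to all three parts is the MGF estimate: for any real $\lambda$, independence gives $\E(e^{\lambda X}) = \prod_i \E(e^{\lambda X_i}) = \prod_i (1 + p_i(e^{\lambda}-1)) \leq \prod_i \exp(p_i(e^{\lambda}-1)) = \exp(\mu(e^{\lambda}-1))$, using $1 + x \leq e^x$.

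For part 1 (lower tail) I would take $\lambda < 0$ and apply Markov's inequality to $e^{\lambda X}$:
\begin{equation*}
\P\big(X < (1-\epsilon)\mu\big) = \P\big(e^{\lambda X} > e^{\lambda(1-\epsilon)\mu}\big) \leq e^{-\lambda(1-\epsilon)\mu}\,\E(e^{\lambda X}) \leq \exp\big(\mu[(e^{\lambda}-1) - \lambda(1-\epsilon)]\big).
\end{equation*}
Minimizing the exponent over $\lambda$ gives $\lambda = \ln(1-\epsilon) < 0$ and the sharp bound $(e^{-\epsilon}/(1-\epsilon)^{1-\epsilon})^{\mu} = \exp(\mu[-\epsilon - (1-\epsilon)\ln(1-\epsilon)])$. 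It then remains to check $-\epsilon - (1-\epsilon)\ln(1-\epsilon) \leq -\epsilon^2/2$ for $0<\epsilon<1$, which follows from the power series $(1-\epsilon)\ln(1-\epsilon) = -\epsilon + \sum_{k\geq 2}\epsilon^k/(k(k-1)) \geq -\epsilon + \epsilon^2/2$.

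Parts 2 and 3 (upper tail) follow the mirror-image computation with $\lambda > 0$, yielding the master bound
\begin{equation*}
\P\big(X > (1+\delta)\mu\big) \leq \Big(\frac{e^{\delta}}{(1+\delta)^{1+\delta}}\Big)^{\mu} = \exp\big(\mu\,\phi(\delta)\big), \qquad \phi(\delta) := \delta - (1+\delta)\ln(1+\delta),
\end{equation*}
after optimizing at $\lambda = \ln(1+\delta)$. For part 2 one needs $\phi(\delta) \leq -\epsilon^2/4$ with $\delta = \epsilon \in (0,1)$; this is a routine monotonicity check on $[0,1]$ (in fact the sharper $\phi(\delta) \leq -\delta^2/3$ holds there, and $-\delta^2/3 \leq -\delta^2/4$). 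For part 3 one needs $\phi(c) \leq -c/2$ for $c \geq 1$; since $\phi$ is decreasing and $(1+c)\ln(1+c)$ eventually dominates $3c/2$, I would verify this by checking a base case and comparing derivatives.

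I expect the only real work to be the three numerical inequalities that relax the exact rate functions $-\epsilon-(1-\epsilon)\ln(1-\epsilon)$ and $\phi(\delta)$ into the clean quadratic/linear exponents in the statement. The lower-tail one is immediate from the series; the $0<\delta<1$ upper-tail one is a short calculus exercise; the $c \geq 1$ case is the most delicate, since $-c/2$ is the loosest of the three relaxations — here I would be careful about the admissible range of $c$, relying on the exact rate $\phi(c)$ together with the monotonicity of $\phi$ (or on the cited bound in \cite{bbct:sublinearpr:waw12}) for values of $c$ where the stated constant is nearly tight.
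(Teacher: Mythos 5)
First, note that the paper does not prove this lemma at all: it is imported verbatim from \cite{bbct:sublinearpr:waw12}, so there is no internal proof to compare against. Your MGF/Markov route with $\E(e^{\lambda X})\le\exp(\mu(e^{\lambda}-1))$ and optimization at $\lambda=\ln(1\pm\delta)$ is the standard and appropriate argument, and your treatments of parts 1 and 2 are complete: the series bound $-\epsilon-(1-\epsilon)\ln(1-\epsilon)\le-\epsilon^{2}/2$ and the bound $\phi(\epsilon)\le-\epsilon^{2}/(2+\epsilon)\le-\epsilon^{2}/3\le-\epsilon^{2}/4$ for $0<\epsilon<1$ both check out. (One reading caveat you silently adopt and should state: as written, $\E(X)$ is the mean of the sum, so ``$r\E(X)$'' only makes sense if $\E(X)$ is read as the common mean of the $X_i$; your $\mu$ is the mean of $X$, which is the interpretation consistent with how the paper applies the lemma.)

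There is, however, a genuine gap in part 3, and you cannot close it the way you propose. You need $\phi(c)=c-(1+c)\ln(1+c)\le-c/2$, i.e.\ $(1+c)\ln(1+c)\ge 3c/2$, for all $c\ge1$, and your ``check a base case and compare derivatives'' plan fails at the base case: $\phi(1)=1-2\ln 2\approx-0.386>-0.5$, and the inequality $(1+c)\ln(1+c)\ge 3c/2$ only begins to hold at $c\approx1.40$. This is not a defect of the Chernoff method that a cleverer relaxation could fix: the rate $\phi$ is asymptotically tight, e.g.\ for a sum of $r$ Bernoulli$(p)$ variables with $p$ small and $\mu=rp$ large one has $\P(X>2\mu)=\exp\bigl(-(2\ln 2-1)\mu(1+o(1))\bigr)$, which eventually exceeds $\exp(-\mu/2)$, so the stated bound with constant $\tfrac12$ is simply not provable (indeed not true in general) at $c=1$. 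Falling back on ``the cited bound in \cite{bbct:sublinearpr:waw12}'' is circular here, since that citation is exactly the statement being proved. The honest repairs are either to weaken the constant to $\exp(-\tfrac{c}{3}\mu)$ for $c\ge1$ (which follows since $(1+c)\ln(1+c)\ge 4c/3$ holds at $c=1$ and the derivative comparison then works), or to keep the constant $\tfrac12$ but restrict to $c\ge2$ (via $\phi(\delta)\le-\delta^{2}/(2+\delta)$); either version suffices for the way the lemma is used in the proof of Theorem~\ref{thm:dirhkpralg}, but your proposal as written proves neither the stated part 3 nor an explicit corrected substitute.
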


\begin{proof}[Proof of Theorem~\ref{thm:dirhkpralg}]
For the sake of simplicity, we provide analysis for the positive part of the
vector, $f := f_+$, noting that it is easily applied similarly to the negative
part as well.

The vector $f' = f/\onenorm{f}$ is a probability distribution and the heat
kernel pagerank $\rho'_{S,t,f} = \rho_{S,t,f}/\onenorm{f}$ can be interpreted as
a series of Dirichlet random walks in which, with probability
$e^{-t}\frac{t^k}{k!}$, $f'^TP_S^k$ is contributed to $\rho'_{S,t,f}$.  This is
demonstrated by examining the coefficients of the terms, since
\begin{equation*}
e^{-t} \sum\limits_{k=0}^{\infty} \frac{t^k}{k!} = 1.
\end{equation*}
The probability of taking $k\sim Pois(t)$ steps such
that $k \geq t/\epsilon$ is less than $\epsilon$ by Markov's inequality.
Therefore, enforcing an upper bound of $K = t/\epsilon$ for the number of random
walk steps taken is enough mixing time with probability at least $1-\epsilon$.

For $k\leq \K$, our algorithm approximates $f'^T P_S^k$ by simulating $k$ random
walk steps according to $P$ as long as the random walk remains in $S$.  If the
random walk ever leaves $S$, it is ignored.  To be specific, let $X^v_k$ be the
indicator random variable defined by $X^v_k = 1$ if a random walk beginning from
a vertex $u$ drawn from $f' = f/\onenorm{f}$ ends at vertex $v$ in $k$ steps
without leaving $S$.  Let $X^v$ be the random variable that considers the random
walk process ending at vertex $v$ in \emph{at most} $k$ steps without leaving
$S$.  That is, $X^v$ assumes the vector $X^v_k$ with probability
$e^{-t}\frac{t^k}{k!}$.  Namely, we consider the combined random walk
\begin{equation*}
X^v=   \sum_{k \leq \K} e^{-t}\frac{t^k}{k!} X^v_k.
\end{equation*}

Now, let $\rho(k)_{S,t,f}$ be the contribution to the heat kernel pagerank
vector $\dirhkpr'$ of walks of length at most $k$.  The expectation of each
$X^v$ is $\rho(k)_{S,t,f}(v)$.  Then, by Lemma~\ref{lem:chernoff},
\begin{align*}
\P(X^v < (1-\epsilon) \rho(k)_{S,t,f}(v)\cdot r) &< \exp(-\rho(k)_{S,t,f}(v)r\epsilon^2/2)\\
&= \exp(-(8/\epsilon)\rho(k)_{S,t,f}(v)\log n)\\
&< n^{-4}
\end{align*}
for every component with $\dirhkpr'(v) > \epsilon$, since then
$\rho(k)_{S,t,f}(v) > \epsilon/2$.  Similarly,
\begin{align*}
\P(X^v > (1+\epsilon) \rho(k)_{S,t,f}(v)\cdot r) &< \exp(-\rho(k)_{S,t,f}(v)r\epsilon^2/4)\\
&= \exp(-(4/\epsilon)\rho(k)_{S,t,f}(v)\log n)\\
&< n^{-2}.
\end{align*}
We conclude the analysis for the support of $\dirhkpr'$ by noting that $\dirhkapprox =
\frac{1}{r} X^v$, and we achieve an $\epsilon$-multiplicative error bound for
every vertex $v$ with $\dirhkpr'(v) > \epsilon$ with probability at least
$1-O(n^{-2})$.

On the other hand, if $\dirhkpr'(v)\leq\epsilon$, by the third part of
Lemma~\ref{lem:chernoff}, $\P(\dirhkapprox(v) > 2\epsilon) \leq n^{-8/\epsilon^2}$.
We conclude that, with high probability, $\dirhkapprox(v) \leq 2\epsilon$.

Finally, when $f$ is not a probability distribution, the above applies to $f' =
f/\onenorm{f}$.  Let $\dirhkapprox'$ be the output of the algorithm using $f' =
f/\onenorm{f}$ and $\dirhkpr'$ be the corresponding Dirichlet heat kernel
pagerank vector $\rho_{S,t,f'}$.  The full error of the Dirichlet heat kernel pagerank returned
is
\begin{align*}
\onenorm{\dirhkpr - \dirhkapprox} &\leq \onenorm{ \onenorm{f}\dirhkpr' -
\onenorm{f}\dirhkapprox' }\\
&\leq \onenorm{f}\onenorm{\dirhkpr' - \dirhkapprox'}\\
&\leq \epsilon\onenorm{f}\onenorm{\dirhkpr'}\\
&= \epsilon\onenorm{f}.
\end{align*}

For the running time, we use the assumptions that performing a random walk step
and drawing from a distribution with finite support require constant time.
These are incorporated in the random walk simulation, which dominates the
computation.  Therefore, for each of the $r$ rounds, at most $K$ steps of the
random walk are simulated, giving a total of $rK =  O\Big(\rparamdir \cdot
\K\Big)= \tilde {O}(t)$ queries.
\end{proof}

\section{The \greensalg~Algorithm}
\label{sec:efficientsolver}
Here we present the main algorithm, \greensalg, for computing a solution to a
Laplacian linear system with a boundary condition.  It is the
\localsolver~algorithmic framework combined with the scheme for approximating
Dirichlet heat kernel pagerank.  The scheme is an optimized version of the
algorithm \dirhkpralg~with a slight modification.  We call the optimized version
\solverapproxhkpr.

\begin{definition}
Define \solverapproxhkprparams~to be the algorithm \dirhkpralgparams~with the
following modification to lines \ref{line:boundk} and~\ref{line:boundk2} after
drawing $k\sim Poiss(t)$:
\begin{quote}
$k \gets min\{k, 2t\}$.
\end{quote}
Namely, this modification limits the length of random walk steps to at most $2t$.
\end{definition}

\begin{theorem}
\label{thm:solverapproxhkpr}
Let $G$ be a graph and $S$ a subset of size $s$.  Let $T = \tparamsolver$, and
let $N = T/\gamma$ for some $0 < \gamma < 1$.  Suppose $j$ is a random variable
drawn from $[1,\lfloor N \rfloor]$ uniformly at random and let $t = jT/N$.  Then
if $\epsilon \geq \gamma$, the algorithm \solverapproxhkpr~returns a
vector that $\epsilon$-approximates $\dirhkpr$ with probability at least
$1-\epsilon$.  Using the same query assumptions as Theorem~\ref{thm:dirhkpralg},
the running time of \solverapproxhkpr~is $O\left(\epsilon^{-3} t \log n
\right)$.
\end{theorem}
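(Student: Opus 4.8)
The plan is to regard \solverapproxhkpr~as the algorithm \dirhkpralg~with a single change — the random-walk-length cap is $K=2t$ rather than $K=t/\epsilon$ — and to re-use the proof of Theorem~\ref{thm:dirhkpralg} essentially verbatim, isolating the one place where the value of $K$ matters. The running time is immediate from this viewpoint: the algorithm still runs $r=\rparamdir$ independent rounds, each simulating at most $K=2t$ Dirichlet-random-walk steps plus $O(1)$ sampling work, so under the stated constant-time query assumptions the total cost is $rK=O\big(\epsilon^{-3}(\log n)\cdot 2t\big)=O(\epsilon^{-3}t\log n)$.

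For correctness I would keep the structure of the proof of Theorem~\ref{thm:dirhkpralg}: after the usual reduction to the positive and negative parts of $f$ and normalization, the output is an empirical average of $r$ indicators of capped Dirichlet walks, so $\E[\dirhkapprox]$ equals the length-$K$ truncation $\rho(K)_{S,t,f}$ of $\dirhkpr$, and the coordinatewise applications of the Chernoff bounds in Lemma~\ref{lem:chernoff} go through unchanged (they never used the specific value of $K$), giving with probability $1-O(n^{-2})$ a multiplicative $\epsilon$-error on every coordinate $v$ with $\rho(K)_{S,t,f}(v)>\epsilon/2$ and value at most $2\epsilon$ on the rest. What remains is to show that passing from $\dirhkpr$ to its length-$2t$ truncation costs at most $O(\epsilon)$ in $\onenorm{\cdot}$ — equivalently, that it keeps a $1-O(\epsilon)$ fraction of the mass on every large coordinate. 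Since $\onenorm{f^TP_S^k}\le 1$ when $f$ is a probability distribution, this truncation error is at most $2\,\Pr[\mathrm{Poiss}(t)\ge 2t]$, and I would replace the Markov estimate used in Theorem~\ref{thm:dirhkpralg} by the Chernoff tail bound $\Pr[\mathrm{Poiss}(t)\ge 2t]\le(e/4)^{t}$.

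Finally, because the cap is $2t$ rather than $t/\epsilon$, the bound $(e/4)^{t}\le\epsilon$ no longer holds for every $t$ but only once $t\gtrsim\log(1/\epsilon)$; here is where the hypotheses on $j$ and the relation $\epsilon\ge\gamma$ enter. Writing $t=jT/N=j\gamma$ with $j$ uniform on $[1,\lfloor N\rfloor]$ and $\lfloor N\rfloor=\lfloor T/\gamma\rfloor$, the bound fails only for the $j$ with $j\gamma$ below an absolute constant times $\log(1/\epsilon)$, i.e.\ for an $O\big(\log(1/\epsilon)/\gamma\big)$-sized subset of the $\lfloor T/\gamma\rfloor$ equally likely values; using $T=\tparamsolver$ and $\epsilon\ge\gamma$ this is an $O\big(\log(1/\epsilon)/T\big)$ fraction, which I would bound by $\epsilon$. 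The genuinely small values of $t$ (those with $2t<1$, where the cap keeps only the zero-step term and the better bound $\Pr[\mathrm{Poiss}(t)\ge 2t]\le t$ is useful only when $t\le\epsilon$) must be absorbed into the same exceptional set. A union bound over $j$ landing in this exceptional set and over the coordinatewise concentration failing then gives the claimed $1-\epsilon$ success probability. I expect the main obstacle to be precisely this last counting step: making the constant in the exceptional fraction fit the stated $1-\epsilon$ bound uniformly over the admissible ranges of $\gamma$, $\epsilon$, and $s$ — once $K$ is decoupled from $\epsilon$, the error bound that ``came for free'' from Markov's inequality in Theorem~\ref{thm:dirhkpralg} has to be recovered through this averaging over $j$.
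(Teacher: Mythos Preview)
Your proposal is correct and follows essentially the same approach as the paper: the paper also reuses the proof of Theorem~\ref{thm:dirhkpralg} verbatim except for the truncation step, replaces the Markov bound by the Poisson Chernoff tail (stated as Lemma~\ref{lem:poissontails}, which gives exactly your $(e/4)^t=e^{t(1-2\log 2)}$), and then bounds the probability over the uniform choice of $j$ that $t=j\gamma$ falls below the threshold $\log(\epsilon^{-1})/(2\log 2-1)$, using $T=\tparamsolver$ and $\epsilon\ge\gamma$ to make this exceptional fraction at most $\epsilon$. Your caveat about the constants in the last counting step is well placed---the paper's verification there is terse---but the mechanism is identical.
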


We will use the following Chernoff bound for Poisson random variables.

\begin{lemma}[\cite{mitzenmacherpandc}]
\label{lem:poissontails}
Let $X$ be a Poisson random variable with parameter $t$.  Then, if $x > t$,
\begin{equation*}
\P(X \geq x) \leq e^{x-t-x\log(x/t)}.
\end{equation*}
\end{lemma}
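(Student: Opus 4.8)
The plan is to prove Lemma~\ref{lem:poissontails} by the standard exponential-moment (Chernoff) method, optimized over the free parameter. First I would fix an arbitrary $\theta > 0$ and apply Markov's inequality to the nonnegative random variable $e^{\theta X}$:
\[
\P(X \geq x) = \P\big(e^{\theta X} \geq e^{\theta x}\big) \leq e^{-\theta x}\,\E\big[e^{\theta X}\big].
\]
The next step is to compute the moment generating function of a Poisson random variable with parameter $t$ explicitly. Summing the defining series,
\[
\E\big[e^{\theta X}\big] = \sum_{k=0}^{\infty} e^{\theta k}\, e^{-t}\frac{t^k}{k!} = e^{-t}\sum_{k=0}^{\infty}\frac{(t e^{\theta})^k}{k!} = e^{t(e^{\theta}-1)},
\]
so that $\P(X \geq x) \leq \exp\!\big(t(e^{\theta}-1) - \theta x\big)$ holds for every $\theta > 0$.

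It then remains to choose $\theta$ so as to minimize the exponent. Writing $g(\theta) = t(e^{\theta}-1) - \theta x$, we have $g'(\theta) = t e^{\theta} - x$, which vanishes at $\theta^{\star} = \log(x/t)$, and since $g''(\theta) = t e^{\theta} > 0$ this is the unique global minimizer. The one place the hypothesis is used is exactly here: $x > t$ guarantees $\theta^{\star} = \log(x/t) > 0$, so this choice lies in the admissible range over which Markov's inequality was applied. Substituting $\theta^{\star}$ back in, $t(e^{\theta^{\star}}-1) = t(x/t - 1) = x - t$ and $\theta^{\star} x = x\log(x/t)$, which gives
\[
\P(X \geq x) \leq \exp\!\big(x - t - x\log(x/t)\big),
\]
precisely the claimed bound.

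There is no genuine obstacle in this argument; it is entirely routine. The only points requiring a little care are the convergence/rearrangement of the power series for the moment generating function (immediate, since all terms are nonnegative) and the verification that the optimizing parameter $\theta^{\star}$ falls inside the region $\theta > 0$ on which the Chernoff bound is valid — which, as noted, is exactly what the hypothesis $x > t$ supplies.
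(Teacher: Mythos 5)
Your proof is correct and is exactly the standard Chernoff-type argument (Markov's inequality applied to $e^{\theta X}$, the Poisson moment generating function $e^{t(e^{\theta}-1)}$, and optimization at $\theta^{\star}=\log(x/t)>0$ using $x>t$); the paper itself gives no proof, citing \cite{mitzenmacherpandc}, whose proof proceeds in this same way.
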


\begin{proof}[Proof of Theorem~\ref{thm:solverapproxhkpr}]
Let $k$ be a Poisson random variable with parameter $t$.  Similar to the proof
of Theorem~\ref{thm:dirhkpralg}, we use Lemma~\ref{lem:poissontails} to reason
that
\begin{align*}
\P(k \geq 2t) &\leq e^{2t-t-2t\log(2t/t)}\\
&= e^{t(1 - 2\log 2)}\\
&\leq \epsilon,
\end{align*}
as long as $t \geq \frac{ \log(\epsilon^{-1}) }{ 1-2\log 2 }$.

Let $E$ be the event that $t < \frac{ \log(\epsilon^{-1}) }{ 1-2\log 2 }$.  The
probability of $E$ is
\begin{align*}
\P\left(jT/N < \frac{ \log(\epsilon^{-1})}{1-2\log 2}\right) &=
\P\left(j < \frac{ \log(\epsilon^{-1}) }{ \gamma(1 - 2\log 2) }\right)\\
&= \frac{ \log(\epsilon^{-1}) }{ (1-2\log 2) s^3 \log(s^3 \gamma^{-1}) },
\end{align*}
which is less than $\epsilon$ as long as $\epsilon \geq \left( \frac{ \gamma }{ s^3
} \right)^{(1-2\log 2)\epsilon s^3}$.  This holds when $\epsilon \geq \gamma$.

As before, the algorithm consists of $r$ rounds of random walk simulation, where
each walk is at most $2t$.  The algorithm therefore makes $r\cdot 2t =
\epsilon^{-3} 32t \log n$ queries, requiring $O\left(\epsilon^{-3} t \log n
\right)$ time.
\end{proof}

Below we give the algorithm \greensalg.  The algorithm is identical to
\localsolver~with the exception of line~\ref{line:solverhkpr}, where we use the
approximation algorithm \solverapproxhkpr~for Dirichlet heat kernel
pagerank computation.

\begin{algorithm}
\floatname{algorithm}{Algorithm}
\caption{\greensalgparams}
\label{alg:approxgreen}
input: graph $G$, boundary vector $b\in\R^V$, subset $S\subset V$, solver error
parameter $0<\gamma<1$, Dirichlet heat kernel pagerank error parameter
$0<\epsilon<1$.\\
output: an approximate local solution $\mathtt{x}$ to the local system $x_S=\G
b_1$ satisyfing the boundary condition $b$.\\
\begin{algorithmic}[1]
  \State $s\gets |S|$
  \State initialize a $0$-vector $\mathtt{x}$ of dimension $s$
  \State $b_1 \gets D_S^{-1/2}A_{S, \delta S}D_{\delta S}^{-1/2}b_{\delta S}$
  \State $b_2 \gets b_1^TD_S^{1/2}$
  \State $T \gets \tparamsolver$
  \State $N \gets T/\gamma$
  \State $r \gets \rparamsolver$
  \For{$i=1$ to $r$}
    \State draw $j$ from $[1,N]$ uniformly at random
    \State $x_i \gets$
\solverapproxhkpr($G,jT/N,b_2,S,\epsilon$)\label{line:solverhkpr}
    \State $\mathtt{x} \gets \mathtt{x} + x_i$
  \EndFor\\
  \Return $T/r \cdot \mathtt{x}D_S^{-1/2}$
\end{algorithmic}   
\end{algorithm}

\begin{theorem}
\label{thm:greensalg}
Let $G$ be a graph and $\L$ denote the normalized Laplacian of $G$.  Let $b$ be
a nontrivial vector $b\in \R^V$ and $S$ a $b$-boundable subset, and let $b_1 =
D_S^{-1/2}A_{S, \delta S}D_{\delta S}^{-1/2}b_{\delta S}$.  For the linear
system $\L x = b$, the solution $x$ is required to satisfy the boundary
condition $b$, and let $x_S$ be the local solution.  Then the approximate
solution $\mathtt{x}$ output by the algorithm \greensalg~satisfies the
following:
\begin{enumerate}[(i)]
\item The error of $\mathtt{x}$ is $\norm{x_S-\mathtt{x}} = O\big(
\gamma(\norm{b_1} + \norm{x_S}) + \epsilon\onenorm{b_2} \big)$ with probability
at least $1-\gamma$,\label{greensalgacc}
\item The running time of \greensalg~ is $\greenscomplexity$ where the big-O
constant reflects the time to perform a random walk step, plus additional
preprocessing time $O(|\partial(S)|)$, where $\partial(S)$ denotes the edge
boundary of $S$.\label{greensalgcomp}
\end{enumerate}
\end{theorem}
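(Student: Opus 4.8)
The plan is to combine the accuracy analysis of \localsolver~(Theorem~\ref{thm:localsolver}) with the approximation guarantee for \solverapproxhkpr~(Theorem~\ref{thm:solverapproxhkpr}), accounting for the extra error introduced when each exact Dirichlet heat kernel pagerank vector $\rho_{S,jT/N,b_2}$ in the \localsolver~loop is replaced by its $\epsilon$-approximation. For part~(\ref{greensalgacc}), I would start from Corollary~\ref{cor:sampler}, which tells us that sampling $r = \rparamsolver$ exact vectors $\rho_{S,jT/N,b_2}$ and forming $\frac{T}{r}\sum_i x_i D_S^{-1/2}$ yields error $O(\gamma(\norm{b_1}+\norm{x_S}))$ with probability at least $1-\gamma$. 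Then I would bound the additional discrepancy between the exact-sample estimator and the approximate-sample estimator actually computed by \greensalg. Since each \solverapproxhkpr~call returns a vector $\hat\rho$ with $\onenorm{\rho_{S,jT/N,b_2}-\hat\rho}\le \epsilon\onenorm{b_2}$ (per Definition~\ref{def:approxhkpr} applied to the general vector $b_2$, using Theorem~\ref{thm:solverapproxhkpr}), the aggregated error over the average of $r$ samples is at most $\epsilon\onenorm{b_2}$ in $L_1$, hence $O(\epsilon\onenorm{b_2})$ after the scaling by $T/r$ and right-multiplication by $D_S^{-1/2}$ — note the factor $T$ from the Riemann-sum width and the factor $1/r$ from averaging cancel against the $r$ summands, so the per-sample additive error does not accumulate. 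Adding the two contributions and invoking the triangle inequality gives $\norm{x_S-\mathtt{x}} = O(\gamma(\norm{b_1}+\norm{x_S})+\epsilon\onenorm{b_2})$. The probability bound is the intersection of the Corollary~\ref{cor:sampler} event (probability $\ge 1-\gamma$) and the union of the $r$ events that each \solverapproxhkpr~call succeeds; since each succeeds with probability $\ge 1-\epsilon$ and $\epsilon\ge\gamma$ — and indeed one would want $r\epsilon$ or the high-probability $n^{-2}$-type bounds from the proof of Theorem~\ref{thm:dirhkpralg} to keep the total failure probability at $O(\gamma)$ — the stated $1-\gamma$ follows, possibly after absorbing constants.

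For part~(\ref{greensalgcomp}), the running time is a straightforward accounting: the algorithm performs $r = \rparamsolver$ iterations, and in each iteration it invokes \solverapproxhkpr($G,jT/N,b_2,S,\epsilon$), which by Theorem~\ref{thm:solverapproxhkpr} costs $O(\epsilon^{-3} t \log n)$ random walk steps with $t = jT/N \le T = \tparamsolver$. Multiplying, the total is $r \cdot O(\epsilon^{-3}T\log n) = O\big(\gamma^{-2}\log(s\gamma^{-1})\cdot \epsilon^{-3} s^3\log(s^3\gamma^{-1})\cdot\log n\big)$, which after consolidating the two logarithmic factors in $s$ and $\gamma^{-1}$ into $\log^2(s^3\gamma^{-1})$ yields the claimed $\greenscomplexity$. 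The preprocessing term $O(|\partial(S)|)$ comes from computing $b_1 = D_S^{-1/2}A_{S,\delta S}D_{\delta S}^{-1/2}b_{\delta S}$ and then $b_2 = b_1^T D_S^{1/2}$, each of which touches only the edges crossing the boundary of $S$, as already observed after Corollary~\ref{cor:xSintegral}.

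The main obstacle I anticipate is the careful bookkeeping of how the per-call additive error $\epsilon\onenorm{b_2}$ propagates through the estimator — in particular verifying that it contributes only $O(\epsilon\onenorm{b_2})$ to the final bound rather than something scaled by $T$, $N$, or $r$. The key observation making this work is that \greensalg~returns $\frac{T}{r}\sum_{i=1}^r x_i D_S^{-1/2}$, an \emph{average} (up to the factor $T$) of the $r$ sampled vectors; replacing each $x_i$ by an $\epsilon$-approximation perturbs this average by at most $\frac{T}{r}\sum_i \epsilon\onenorm{b_2}\norm{D_S^{-1/2}} = T\epsilon\onenorm{b_2}\norm{D_S^{-1/2}}$ in norm. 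Since $\norm{D_S^{-1/2}}\le 1$ and one can fold the $T$ either into the big-O (absorbing $\poly(s,\gamma^{-1})$ is not acceptable here, so one instead relies on the $L_1$-to-$L_2$ comparison and the fact that the additive error in Definition~\ref{def:approxhkpr} is stated \emph{per the normalized distribution}, making the effective contribution $\epsilon\onenorm{b_2}$ without the $T$), the clean form in the theorem statement is recovered; I would make this normalization explicit. A secondary subtlety is ensuring the success-probability union bound stays at $1-\gamma$: here I would lean on the high-probability ($1-O(n^{-2})$ per call) guarantees buried in the proof of Theorem~\ref{thm:dirhkpralg} rather than the weaker $1-\epsilon$ headline bound, so that the $r = \poly(s,\gamma^{-1})$ calls collectively fail with probability $o(\gamma)$.
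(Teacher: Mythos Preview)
Your proposal takes essentially the same route as the paper: invoke Corollary~\ref{cor:sampler} for the exact-sample error $O(\gamma(\norm{b_1}+\norm{x_S}))$, add the per-call additive term $\epsilon\onenorm{b_2}$ coming from Definition~\ref{def:approxhkpr} and Theorem~\ref{thm:solverapproxhkpr}, and for part~(\ref{greensalgcomp}) multiply $r=\rparamsolver$ by the per-call cost $O(\epsilon^{-3}T\log n)$ with $T=\tparamsolver$, plus the $O(|\partial(S)|)$ preprocessing for $b_1,b_2$. The two subtleties you flag---whether the $T/r$ scaling introduces a factor of $T$ into the additive error, and whether a union bound over the $r$ calls degrades the success probability---are in fact not worked out in the paper's own proof either (it simply asserts the final bounds), so your plan is at least as detailed as the original.
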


\begin{proof}
The error of the algorithm using true Dirichlet heat kernel pagerank vectors is
$O\big(\gamma(\norm{b_1} + \norm{x_S})\big)$ by Corollary~\ref{cor:sampler}, so
to prove (\ref{greensalgacc}) we address the additional error of vectors output
by the approximation of \solverapproxhkpr.  By
Theorem~\ref{thm:solverapproxhkpr}, \solverapproxhkpr~outputs an
$\epsilon$-approximate Dirichlet heat kernel pagerank vector with probability at
least $1-\epsilon$.  Let $\dirhkapprox$ be the output of an arbitrary run of
\solverapproxhkprparams.  Then $\norm{\dirhkpr-\dirhkapprox} \leq
\epsilon(\onenorm{\rho_{S,t,f'}} + \onenorm{f}) = \epsilon\onenorm{f}$ by the
definition of $\epsilon$-approximate Dirichlet heat kernel pagerank vectors,
where $f' = f/\onenorm{f}$ is the normalized vector $f$.  This means that the
total error of \greensalg~is
\begin{equation*}
\norm{x_S - \mathtt{x}} \leq O\left(\gamma(\norm{b_1} + \norm{x_S})\right) +
\epsilon\onenorm{b_2}.
\end{equation*}

Next we prove (\ref{greensalgcomp}).  The algorithm makes $r = \rparamsolver$
sequential calls to \solverapproxhkpr.  The maximum possible value of $t$ is $T
= \tparamsolver$, so any call to \solverapproxhkpr~is bounded by $O\left(
\epsilon^{-3} s^3 \log(s^3 \gamma^{-1}) \log n\right)$.  Thus, the total running
time is $\greenscomplexity$.

The additional preprocessing time of $O(|\partial(S)|)$ is for
computing the vectors $b_1$ and $b_2$; these may be computed as a preliminary
procedure.
\end{proof}

We note that the running time above is a sequential running time attained by
calling \solverapproxhkpr~$r$ times.  However, by calling these in $r$ parallel
processes, the algorithm has a parallel running time which is simply the same as
that for \solverapproxhkpr.

\subsection{Restricted Range for Approximation}
Since \solverapproxhkpr~only promises approximate values for vertices whose true
Dirichlet heat kernel pagerank vector values are greater than $\epsilon$, the
\newline\greensalg~algorithm can be optimized even further by preempting when this is
the case.

Figure~\ref{fig:dolphins1norm} illustrates how vector values drop as $t$ gets
large.  The network is the same example network given in
Section~\ref{sec:laplacianlinearsystem} and is further examined in the next
section.  We let $t$ range from $1$ to $T = \tparamsolver \approx 108739$ for
$\gamma = 0.01$ and compute Dirichlet heat kernel pagerank vectors $\dirhkpr$.
The figure plots $L_1$ norms of the vectors as a solid line, and the absolute
value of the maximum entry in the vector as a dashed line.  In this example, no
vector entry is larger than $0.01$ for $t$ as small as $250$.

\begin{figure}
\captionsetup{width=0.8\textwidth} \centering
\includegraphics[width=0.8\textwidth]{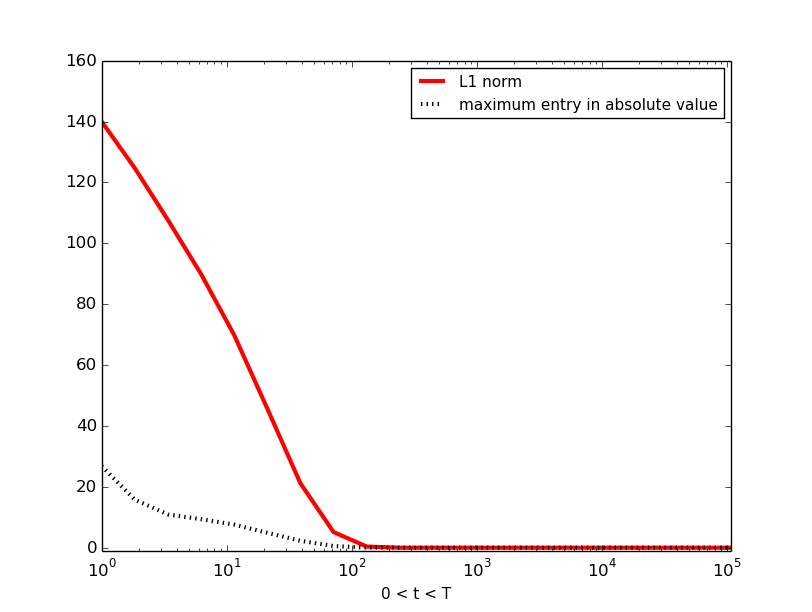} \caption{How support values
of a Dirichlet heat kernel pagerank vector change for different values of $1
\leq t \leq T=\tparamsolver$.  The solid line is the $L_1$ norm -- the sum of
all the support values -- and the dashed line is the absolute value of the
maximum entry in the vector.  Note the x-axis is log-scale.}
\label{fig:dolphins1norm}
\end{figure}

Suppose it is possible to know ahead of time whether a vector $\dirhkpr$ will
have negligably small values for some value $t$.  Then we could skip the
computation of this vector and simply treat it as a vector of all zeros.

From (\ref{eq:error}), the norm of Dirichlet heat kernel pagerank vectors are
monotone decreasing.  Then it is enough to choose a threshold value $t'$ beyond
which $\onenorm{\rho_{S,t',f}} < \epsilon$, since any $\epsilon$-approximation
will return all zeros, and treat this as a cutoff for actually executing the
algorithm.  An optimization heuristic is to only compute
$\solverapproxhkprparams$ if $t$ is less than this threshold value $t'$.
Otherwise we can add zeros (or do nothing).  That is, replace
line~\ref{line:solverhkpr} in \greensalg~with the following:

\begin{algorithmic}
\If{$jT/N < t'$}
  \State $x_i \gets$ \solverapproxhkpr($G,jT/N,b_2,S,\epsilon$)
\Else
  \State do nothing
\EndIf
\end{algorithmic}
From (\ref{eq:error}), a conservative choice for $t'$ is $\frac{1}{\lambda_1}
\log(\epsilon^{-1})$.

\section{An Example Illustrating the Algorithm}
\label{sec:example}
We return to our example to illustrate a run of the Green's solver algorithm for
computing local linear solutions.  The network is a small communication network
of dolphins~\cite{dolphins}.

In this example, the subset has a good cluster, which makes it a good
candidate for an algorithm in which computations are localized.  Namely, it is
ideal for \solverapproxhkpr, which promises good approximation for vertices that
exceed a certain support threshold in terms of the error parameter $\epsilon$.
The support of the vector $b$ is limited to the set of leaders, which is the
vertex boundary of the subset of followers, $l = \delta(f)$.  The vector is
plotted over the agents (vertices) in Figure~\ref{fig:dolphins_boundary_vector}.

\begin{figure}
\begin{centering}
\captionsetup{width=0.8\textwidth}
\makebox[\textwidth][c]{\includegraphics[width=1.2\textwidth]{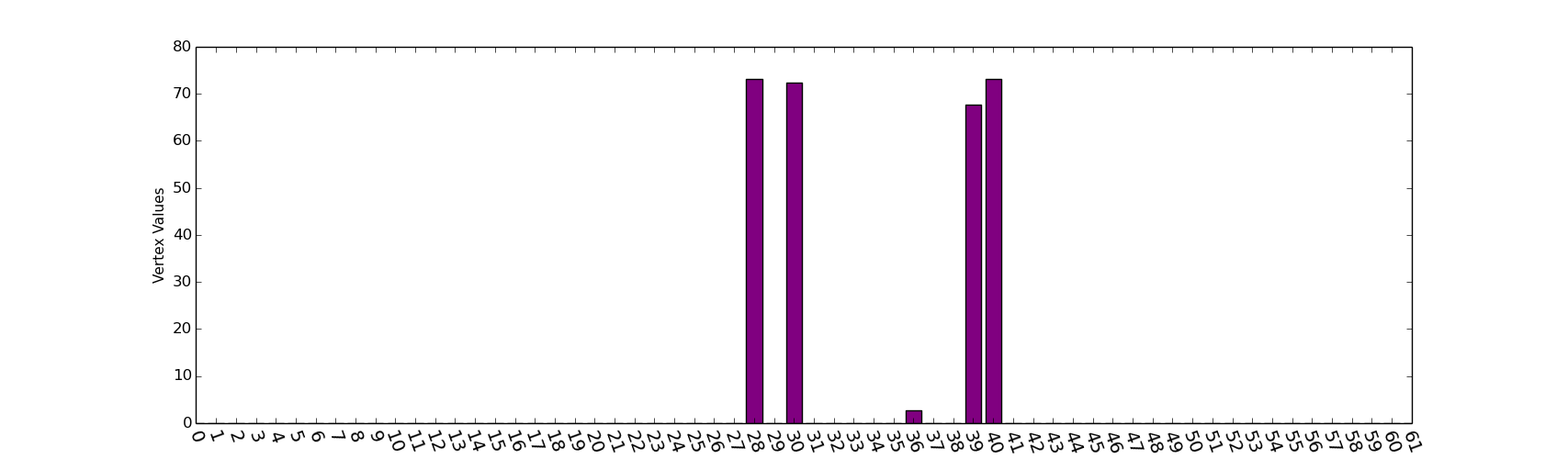}}
\caption{The values of the boundary vector plotted against the agent IDs given
in Figure~\ref{fig:consensus}.}
\label{fig:dolphins_boundary_vector}
\end{centering}
\end{figure}

Figure~\ref{fig:dolphins_sample_hkpr_vector} plots the vector values of the 
heat kernel pagerank vector $\rho_{t,b_2'}$ over the full set of agents.  Here,
we use $b_2'$, the $n$-dimensional vector:
\begin{equation*}
b_2'(v)=
\begin{cases}
b_2(v) \mbox{ if $v\in S$,}\\
0 \mbox{ otherwise,}
\end{cases}
\end{equation*}
and $t=50.0$. The components with largest absolute value are concentrated in the
subset of followers over which we compute the local solution.  This indicates
that an output of \solverapproxhkpr~will capture these values well.

\begin{figure}
\captionsetup{width=0.8\textwidth}
\centering
\makebox[\textwidth][c]{\includegraphics[width=1.2\textwidth]{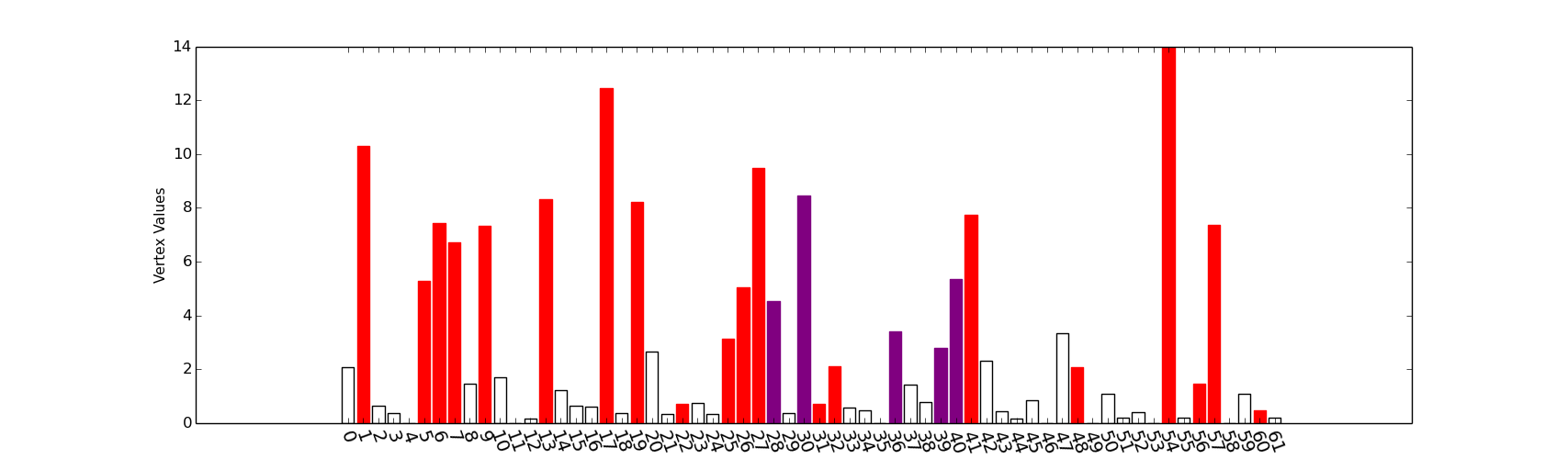}}
\caption{The node values of the full example communication network over a sample heat
kernel pagerank vector.  The red bars correspond to the network of followers,
the purple to the leaders, and the white to the rest of the network.}
\label{fig:dolphins_sample_hkpr_vector}
\end{figure}

\subsection{Approximate solutions}
In the following figures, we plot the results of calls to our approximation
algorithms against the exact solution $x_S$ using the boundary vector of
Figure~\ref{fig:dolphins_boundary_vector}.  The solution $x_S$ is computed by
Theorem~\ref{thm:xS}, and the appromimations are sample outputs of \localsolver
and \greensalg, respectively.  The exact values of $x_S$ are represented by
circles, and the approximate values by triangles in each case.  Note that we
permute the indices of the vertices in the solutions so that vector values in
the exact solution, $x_S$ are decreasing, for reading ease.\footnote{The results
of these experiments as well as the source code are archived at\newline
\url{http://cseweb.ucsd.edu/~osimpson/localsolverexample.html}.}

The result of a sample call to \localsolver~with error parameter $\gamma = 0.01$
is plotted in Figure~\ref{fig:local_linear_solver}.  The total relative error of
this solution is $\frac{\norm{x_S - \hat{x}_S}}{\norm{x_S}} = 0.02$, and the
absolute error $\norm{x_S - \hat{x}_S}$ is within the error bounds given in
Theorem~\ref{thm:localsolver}.  That is, $\norm{x_S - \hat{x}_S} \leq
\gamma\left( \norm{b_1} + \norm{x_S} + \norm{x_{rie}} \right)$, where $x_{rie}$
is the solution obtained by computing the full Riemann sum (as in
Lemma~\ref{lem:sum}).

\begin{figure}
\captionsetup{width=0.8\textwidth}
\centering
\makebox[\textwidth][c]{\includegraphics[width=1.2\textwidth]{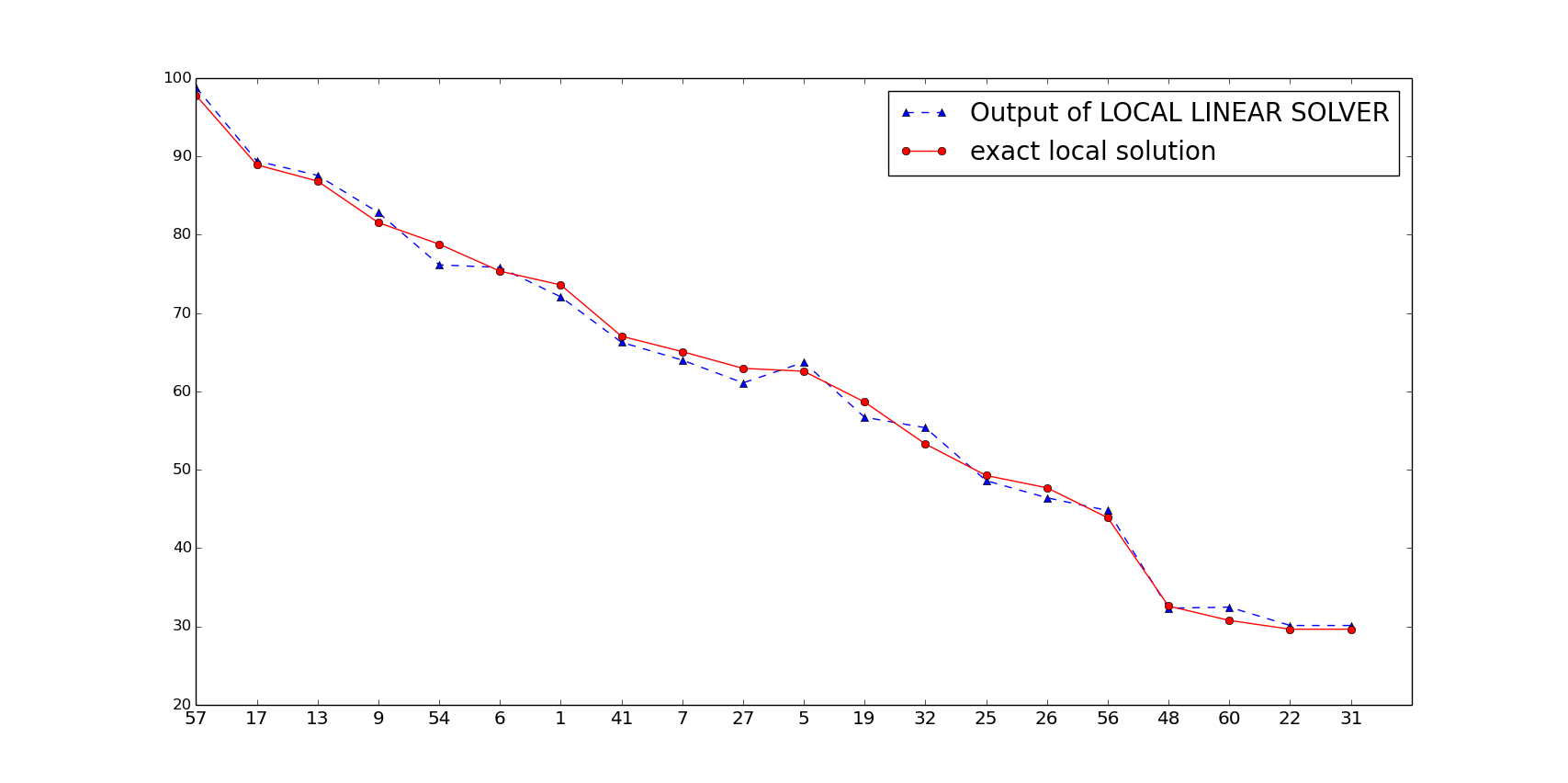}}
\caption{The results of a run of \localsolver.  Two vectors are plotted over IDs
of agents in the subset.  The circles are exact values of $x_S$, while the
triangles are the approximate values returned by \localsolver.}
\label{fig:local_linear_solver}
\end{figure}

The result of a sample call to \greensalg~with parameters $\gamma=0.01,
\epsilon=0.1$ is plotted in Figure~\ref{fig:greens_alg_eps1}.  In this case the
relative error is $\approx 2.05$, but the absolute error meets the error bounds
promised in Theorem~\ref{thm:greensalg} point (\ref{greensalgacc}).
Specifically,
\begin{align*}
\norm{x_S - \hat{x}_S} \leq \left( \gamma( \norm{b_1} + \norm{x_S} + \norm{x_{rie}}
) + \epsilon\onenorm{b_2} \right).
\end{align*}

\begin{figure}
\captionsetup{width=0.8\textwidth}
\centering
\makebox[\textwidth][c]{\includegraphics[width=1.2\textwidth]{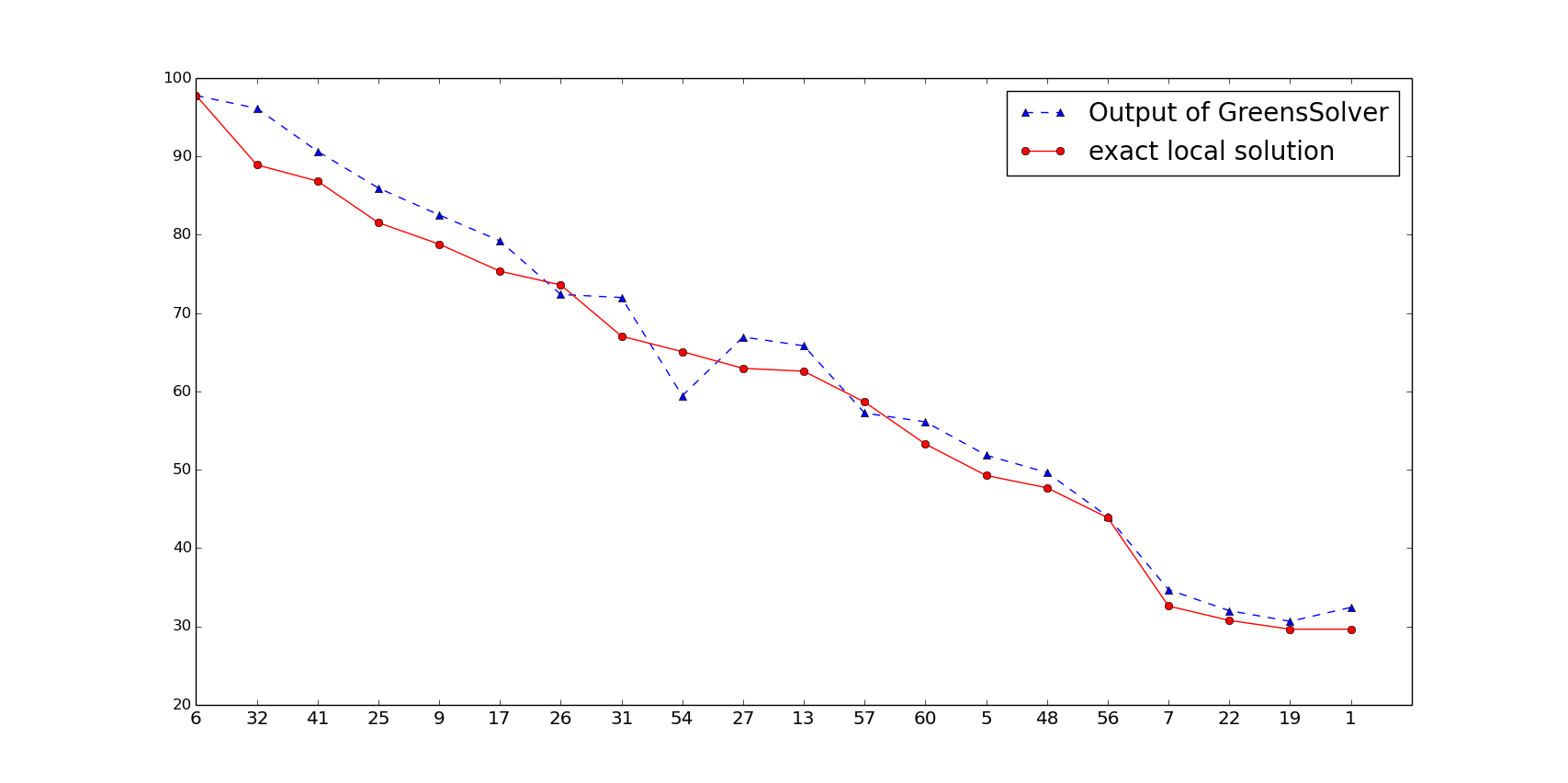}}
\caption{The results of a run of \greensalg~with $\gamma=0.01, \epsilon=0.1$.}
\label{fig:greens_alg_eps1}
\end{figure}

\paragraph{\normalfont\textbf{General remarks.}}While we have focused our
analysis on solving local linear systems with the normalized Laplacian $\L$ as
the coefficient matrix, our methods can be extended to solve local linear
systems expressed in terms of the Laplacian $L$ as well.  There are numerous
applications involving solving such linear systems.  Some examples are discussed
in~\cite{cs:hklinear:13}, and include computing effective resistance in
electrical networks, computing maximum flow by interior point methods,
describing the motion of coupled oscillators, and computing state in a network
of communicating agents.  In addition, we expect the method of approximating
Dirichlet heat kernel pagerank in its own right to be useful in a variety of
related applications.

\paragraph{\normalfont\textbf{Acknowledgements.}} The authors would like to
thank the anonymous reviewers for their comments and suggestions.  Their input has
been immensely helpful in improving the presentation of the results and clarifying
details of the algorithm.

{\footnotesize
\bibliographystyle{amsplain}
\bibliography{im_hklinear_r1}
}

\end{document}